\title{Portfolio optimization with two quasiconvex risk measures}
\author{\c{C}a\u{g}{\i}n Ararat\thanks{Bilkent University, Department of Industrial Engineering, Ankara, Turkey, cararat@bilkent.edu.tr.}
}
\date{December 11, 2020}
\makeatletter \renewenvironment{proof}[1][\proofname] {\par\pushQED{\qed}\normalfont\topsep6\p@\@plus6\p@\relax\trivlist\item[\hskip\labelsep\bfseries#1\@addpunct{.}]\ignorespaces}{\popQED\endtrivlist\@endpefalse} \makeatother
\newtheorem{theorem}{Theorem}[section]
\newtheorem{lemma}[theorem]{Lemma}
\newtheorem{proposition}[theorem]{Proposition}
\newtheorem{assumption}[theorem]{Assumption}
\newtheorem{definition}[theorem]{Definition}
\theoremstyle{definition}
\newtheorem{example}[theorem]{Example}
\newtheorem{remark}[theorem]{Remark}
\numberwithin{equation}{section}
\newcommand{\R}{\mathbb{R}}
\newcommand{\sm}{\!\setminus\!}
\DeclareMathOperator{\dom}{dom}
\let\abs=\envert
\newcommand{\W}{\mathcal{W}}
\renewcommand{\O}{\Omega}
\newcommand{\F}{\mathcal{F}}
\newcommand{\D}{\mathscr{D}}
\newcommand{\X}{\mathcal{X}}
\newcommand{\Y}{\mathcal{Y}}
\newcommand{\A}{\mathscr{A}}
\renewcommand{\P}{\mathscr{P}}
\newcommand{\E}{\mathbb{E}}
\newcommand{\N}{\mathcal{N}}
\renewcommand{\Pr}{\mathbb{P}}
\renewcommand{\a}{\alpha}
\renewcommand{\b}{\beta}
\newcommand{\g}{\gamma}
\newcommand{\1}{\mathbf{1}}
\newcommand{\of}[1]{\ensuremath{\left( #1 \right)}}
\newcommand{\cb}[1]{\ensuremath{ \left\{ #1 \right\} }}
\newcommand{\sqb}[1]{\ensuremath{ \left[ #1 \right] }}
\newcommand{\norm}[1]{\ensuremath{ \left\Vert #1 \right\Vert }}
\newcommand{\ip}[1]{\ensuremath{ \left\langle #1 \right\rangle }}
\def\prehp(#1,#2){\ensuremath{  #1 \cdot #2 }}
\begin{document}
\maketitle
\thispagestyle{empty}

\begin{abstract}
We study a static portfolio optimization problem with two risk measures: a principle risk measure in the objective function and a secondary risk measure whose value is controlled in the constraints. This problem is of interest when it is necessary to consider the risk preferences of two parties, such as a portfolio manager and a regulator, at the same time. A special case of this problem where the risk measures are assumed to be coherent (positively homogeneous) is studied recently in a joint work of the author. The present paper extends the analysis to a more general setting by assuming that the two risk measures are only quasiconvex. First, we study the case where the principal risk measure is convex. We introduce a dual problem, show that there is zero duality gap between the portfolio optimization problem and the dual problem, and finally identify a condition under which the Lagrange multiplier associated to the dual problem at optimality gives an optimal portfolio. Next, we study the general case without the convexity assumption and show that an approximately optimal solution with prescribed optimality gap can be achieved by using the well-known bisection algorithm combined with a duality result that we prove.\\
\\[-5pt]
\textbf{Keywords and phrases:} portfolio optimization, quasiconvex risk measure, minimal penalty function, maximal risk function, Lagrange duality, bisection method\\
\\[-5pt]
\textbf{Mathematics Subject Classification (2010): }90C11, 90C20, 90C90, 91B30, 91G10.
\end{abstract}

\section{Introduction}\label{intro}

Risk measures are functionals that are defined on a linear space of real-valued random variables hosted by a common probability space. In the context of financial mathematics, each random variable can denote the uncertain future worth of an investor's position, and a risk measure assigns a (deterministic) extended real number to the random variable; this number quantifies the initial capital that is needed for compensating the risk of the position.

Introduced in \cite{artzner}, risk measures have been studied extensively in the financial mathematics literature over the last two decades. In \cite{artzner}, the so-called \emph{coherent risk measures} are studied within an axiomatic framework; we provide mathematical formulations of these axioms in Section \ref{setup} for the convenience of the reader. Among the properties of a coherent risk measure, \emph{positive homogeneity} imposes that the risk of a financial position is scalable by the size of the position. While some classical risk measures such as negative expected value and average value-at-risk (see \citet[Example~4.40]{fs:sf}) enjoy this property, positive homogeneity can be found restrictive from a financial point of view. To this end, \emph{convex risk measures} provide a richer class of risk measures where positive homogeneity is not taken for granted. A classical example of a convex but not coherent risk measure is the entropic risk measure, which has a simple expression of the log-sum-exp form (see Example~\ref{entropic}). The reader is referred to \citet[Chapter~4]{fs:sf} for a detailed discussion on convex risk measures.

The convexity property of a risk measure is often motivated by the statement ``Diversification does not increase risk," which emphasizes the role of allocating one's capital into a variety of investment opportunities. More recently, it has been argued that quasiconvexity can be used as a relaxed alternative for convexity as it still captures the idea behind diversification. Therefore, \emph{quasiconvex risk measures}, as argued in \cite{marinacci} and \cite{drapeaukupper}, cover a wider range of functionals that can be used for risk measurement purposes; these include certainty equivalents (see \citet[Example~8]{drapeaukupper} and Section~\ref{certaintyequivalent}) and economic indices of riskiness (see \citet[Example~3]{drapeaukupper}) in addition to convex (and coherent) risk measures described above.

A rich class of problems where risk measures appear naturally is that of portfolio optimization problems. In these problems, one wishes to minimize or control the risk of the future value of a portfolio that consists of multiple risky assets. For special families of asset return distributions, the works \cite{landsmanold,landsman,owadally} study static portfolio optimization problems with a single coherent risk measure that appears in the objective function. More recently, in the previous joint work \cite{deniz} of the author, a static portfolio optimization problem with two coherent risk measures is formulated. In this problem, the decision-maker aims to minimize the value of a principle risk measure, e.g., the risk measure of the portfolio manager, while keeping the value of a secondary risk measure, e.g., the risk measure declared by a regulatory authority, below a critical threshold. In \cite{deniz}, a complete analysis of this problem is provided for the general case of arbitrary asset return distributions and arbitrary coherent risk measures that satisfy certain regularity conditions.

On the other hand, the use of quasiconvex risk measures in portfolio optimization is relatively new. In \cite{elisa}, a static portfolio optimization problem is studied, where the objective function is the composition of a quasiconvex risk measure and a concave functional that is defined on the space of portfolios, and this composition is to be minimized over a convex compact set of portfolios. The main result \citet[Theorem~4]{elisa} provides a sufficient condition for a portfolio to be optimal in terms of a set relation between normal cones and generalized subdifferentials for quasiconvex functions. In particular, the derivations rely on the dual representations for quasiconvex risk measures developed in \cite{drapeaukupper} as well as the general duality theory for quasiconvex functions initiated earlier in \cite{penotvolle}. In \cite{sigrid}, quasiconvex risk measures are used in a dynamic portfolio optimization problem in continuous time in order to model ambiguity-averse preferences. The work \cite{sigrid} also makes use of the dual representation results of \cite{drapeaukupper}.

The aim of the present paper is to extend the static portfolio optimization problem in \cite{deniz} by assuming that both the principle and the secondary risk measures are quasiconvex. In particular, we cover the case where the two risk measures are convex. It should be noted that the extension from the coherent case to the quasiconvex case requires entirely different duality arguments, explaining the mathematical originality of the present paper. On the other hand, compared to the portfolio optimization problem in \cite{elisa} with a single quasiconvex risk measure and a general convex set constraint, our problem assumes that the constraint has a special structure induced by the secondary risk measure. This structure makes it possible to formulate a more explicit dual problem with a linear inequality constraint.

The rest of the paper is organized as follows. After reviewing some basic notions about quasiconvex risk measures in Section~\ref{setup} and introducing the primal problem in Section~\ref{arbitrary}, we break down the analysis of the problem into two steps. First, in Section~\ref{convexcase}, we work under the assumption that the principle risk measure is a convex functional (not necessarily translative though). We formulate the dual problem in Section~\ref{sect1} and prove that (Theorem~\ref{thm1}) there is zero duality gap between the primal and dual problems. In Section~\ref{sect2}, we impose further structural properties on the principle risk measure and prove that (Theorem~\ref{thm2}) a Lagrange multiplier attached to a linear inequality constraint of the dual problem at optimality yields an optimal portfolio vector for the primal problem. Next, in Section~\ref{quasiconvexcase}, we remove the convexity assumption and reformulate the quasiconvex portfolio optimization problem via a family of convex feasibility problems parametrized by a decision variable of the quasiconvex problem. Similar to the results of Section~\ref{convexcase}, we provide a duality-based method to solve each of these feasibility problems. Then, we employ the well-known bisection method that iterates through different values of the parameter of the feasibility problems and stops with prescribed suboptimality in finitely many iterations. Hence, combining the duality result with the bisection method provides a way to find an approximately optimal solution for the portfolio optimization problem under quasiconvex risk measures. Finally, in Section~\ref{examples}, we consider convex risk measures and certainty equivalents in order to illustrate the use of the dual problem, and we discuss the validity of some technical assumptions stated in Section~\ref{convexcase} and Section~\ref{quasiconvexcase}.

\section{Quasiconvex risk measures}\label{setup}

In this section, we fix the notation for the rest of the paper and review some preliminary notions related to risk measures. For the latter, we focus on the more recent quasiconvex framework studied in \cite{marinacci, drapeaukupper}.

Let $n\in\mathbb{N}\coloneqq\cb{1,2,\ldots}$. We assume that the standard Euclidean space $\R^n$ is equipped with an arbitrary norm $\abs{\cdot}$ and the usual inner product defined by $x^{\mathsf{T}}z\coloneqq \sum_{i=1}^n x_iz_i$ for $x,z\in\R^n$. We denote by $\R^n_+$ the positive orthant in $\R^n$, that is, the set of all $x=(x_1,\ldots,x_n)^{\mathsf{T}}\in\R^n$ with $x_i\geq 0$ for each $i\in\cb{1,\ldots,n}$.

To introduce the probabilistic setup, let $(\O,\F,\Pr)$ be a probability space and denote by $L_n^0$ the set of all $\F$-measurable random variables taking values in $\R^n$, where two elements are considered identical if they are equal $\Pr$-almost surely. For each $X\in L_n^0$, we define
\[
\norm{X}_p\coloneqq (\E\sqb{\abs{X}^p})^\frac1p
\]
for $p\in [1,+\infty)$ and
\[
\norm{X}_p \coloneqq \inf\{c\geq 0\mid \Pr\{\abs{X}\leq c\}=1\}
\]
for $p=+\infty$. Let $p\in [1,+\infty]$. The space $L_n^p\coloneqq \{X\in L_n^0\mid \norm{X}_p<+\infty\}$ is a Banach space equipped with the norm $\norm{\cdot}_p$. For brevity, let $L^p\coloneqq L_1^p$; for $Y_1,Y_2\in L^p$, we write $Y_1\leq Y_2$ if $\Pr\cb{Y_1\leq Y_2}=1$, which yields the cone $L^p_+\coloneqq\cb{Y\in L^p\mid 0\leq Y}$.

Let $\Y=L^p$, where $p\in [1,+\infty]$. The space $\Y$ is considered with its strong topology induced by the norm $\norm{\cdot}_p$ if $p<+\infty$ and with the weak$^\ast$ topology $\sigma(L^\infty,L^1)$ if $p=+\infty$. Under this topology, we denote by $\Y^\ast$ the topological dual space of $\Y$ with the bilinear duality mapping $\ip{\cdot,\cdot}\colon\Y^\ast\times\Y\to\R$. Hence, $\Y^\ast = L^q$ with $\ip{V,Y}=\E\sqb{VY}$ for every $V\in \Y^\ast, Y\in \Y$, where $q\in [1,+\infty]$ is the conjugate exponent of $p$, that is, $\frac1p+\frac1q=1$. In all cases, we consider $\Y^\ast$ with the weak topology $\sigma(\Y^\ast,\Y)$.

For a functional $\rho\colon\Y\to \bar{\R}\coloneqq [-\infty,+\infty]$, let us consider the following properties.
\begin{enumerate}[\bf (i)]
	\item \textbf{Monotonicity:} $Y_1 \leq Y_2$ implies $ \rho(Y_1) \geq \rho(Y_2)$ for every $Y_1,Y_2\in \Y$.
	\item \textbf{Quasiconvexity:} It holds $\rho(\lambda Y_1+(1-\lambda)Y_2)\leq\max\cb{\rho(Y_1),\rho(Y_2)}$ for every $Y_1,Y_2\in \Y$ and $\lambda\in(0,1)$.
	\item \textbf{Convexity:} It holds $\rho(\lambda Y_1+(1-\lambda)Y_2)\leq \lambda \rho(Y_1)+(1-\lambda)\rho(Y_2)$ for every $Y_1,Y_2\in\Y$ and $\lambda\in(0,1)$ (with the inf-addition convention $(+\infty)+(-\infty)=+\infty$ for the right hand side).
	\item \textbf{Translativity:} It holds $\rho(Y + y)=\rho(Y)-y$ for every $Y\in \Y$ and $y\in\R$.
	\item \textbf{Positive homogeneity:} It holds $\rho(\lambda Y)=\lambda \rho(Y)$ for every $Y\in \Y$ and $\lambda \geq 0$.
\end{enumerate}
For each $t\in\R$, let us define
\[
\A^t\coloneqq \cb{Y\in\Y\mid \rho(Y)\leq t},
\]
which is called the \emph{acceptance set} of $\rho$ at level $t$. It is easy to check that
\begin{equation}\label{recovery}
	\rho(Y)=\inf\cb{t\in\R\mid Y\in\A^t},\quad Y\in\Y.
\end{equation}
Note that $\rho$ satisfies quasiconvexity if and only if $\A^t$ is convex for each $t\in\R$. The functional $\rho$ is called a \emph{quasiconvex risk measure} if it satisfies (i) and (ii), a \emph{convex risk measure} if it satisfies (i), (ii), (iii), (iv), and a \emph{coherent risk measure} if it satisfies (i), (ii), (iii), (iv), (v). Hence, a coherent risk measure is necessarily a convex risk measure, and a convex risk measure is necessarily a quasiconvex risk measure. In addition, convexity implies quasiconvexity; and, under monotonicity, quasiconvexity and translativity imply convexity; see \citet[Exercise 4.1.1]{fs:sf}. Hence, when working within the general framework of quasiconvex risk measures, one does not assume translativity.

In the current paper, we study a static portfolio optimization problem with two quasiconvex risk measures. In order to formulate a dual problem associated to the portfolio optimization problem, the dual representations of these risk measures will have a crucial role. We review the dual representation result of \cite{drapeaukupper} next. To that end, let $\Y^\ast_+\coloneqq L^q_+$.

\begin{definition}\label{maximalpen}
	\cite[Definition~9]{drapeaukupper} A function $\beta\colon (\Y^\ast_+\sm\cb{0})\times\R\to \bar{\R}$ is called a \emph{maximal risk function} if it satisfies the following properties.
	\begin{enumerate}[(a)]
		\item $\beta$ is increasing and left-continuous in the second argument.
		\item $\beta$ is jointly quasiconcave.
		\item It holds $\beta(\lambda V,\lambda s)=\beta(V,s)$ for every $V\in\Y^\ast_+, s\in\R,\lambda>0$.
		\item It holds $\lim_{s\rightarrow-\infty}\beta(V_1,s)=\lim_{s\rightarrow-\infty}\beta(V_2,s)$ for every $V_1,V_2\in\Y^\ast_+$.
		\item The right-continuous version $(V,s)\mapsto \beta^+(V,s)\coloneqq \inf_{s^\prime>s}\beta(V,s^\prime)$ of $\beta$ is (weakly) upper semicontinuous in the first argument.
	\end{enumerate}
\end{definition}

Thanks to property (c) in Definition~\ref{maximalpen}, one can simply work with the restriction of $\b$ on the set $\Y_+^{\ast,1}\times\R$, where
\begin{equation}\label{basecone}
	\Y_+^{\ast,1}\coloneqq L^{q,1}_+\coloneqq \cb{Y\in L^q_+\mid \E[Y]=1}.
\end{equation}

Let $\rho$ be a lower semicontinuous quasiconvex risk measure. We define the \emph{minimal penalty function} $\alpha\colon \Y_+^\ast\times\R\to\bar{\R}$ of $\rho$ by
\[
\a(V,t)\coloneqq \sup_{Y\in\A^t}\E\sqb{-VY},\quad V\in\Y_+^\ast,t\in\R.
\]
For each $t\in\R$, the acceptance set $\A^t$ is a closed convex subset of $\Y$, hence it is characterized by its support function $V\mapsto \a(V,t)$. Consequently, the risk measure $\rho$ is uniquely determined by its minimal penalty function $\a$. In quasiconvex analysis, it is sometimes useful to work with the left-continuous version $\a^{-}$ of $\a$ (with respect to the second variable) defined by
\begin{equation}\label{support}
	\a^{-}(V,t)\coloneqq \sup_{t^\prime<t}\a(V,t^\prime),\quad V\in\Y_+^\ast, t\in\R.
\end{equation}
From the proof of \citet[Theorem~3]{drapeaukupper}, it follows that
\[
\a^{-}(V,t)=\sup_{Y\in\A^{t-}}\E\sqb{-VY},\quad V\in\Y_+^\ast,t\in\R,
\]
where, for each $t\in\R$, $\A^{t-}$ is the \emph{strict acceptance set} at level $t$ defined by
\[
\A^{t-}\coloneqq\cb{Y\in\Y\mid \rho(Y)<t}.
\]
For convenience, we recall the following result from \cite{drapeaukupper}, which provides a precise formula to recover $\rho$ from $\a$ or $\a^{-}$.

\begin{proposition}\label{dualrep}
	\cite[Theorem~3]{drapeaukupper} Let $\rho \colon \Y\to \bar{\R}$ be a lower semicontinuous quasiconvex risk measure with minimal penalty function $\a$. Then, there exists a unique maximal risk function $\beta$ such that
	\[
	\rho(Y)=\sup_{V\in \Y^\ast_+\backslash\cb{0}}\beta(V,\E[-VY]),\quad Y\in \Y.
	\]
	Moreover, such $\beta$ is given as the left-continuous pseudo-inverse of $\alpha$ or that of $\alpha^{-}$, that is,
	\[
	\beta(V,s)=\inf\cb{t\in\R\mid s\leq \a(V,t)}=\inf\cb{t\in\R\mid s\leq \a^{-}(V,t)}
	\]
	for every $V\in\Y^\ast_+$ and $s\in\R$.
\end{proposition}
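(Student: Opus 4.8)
The plan is to obtain the representation directly from the recovery formula \eqref{recovery} by exploiting that each acceptance set is a closed convex set that is, in addition, ``upward closed'' under the ordering cone. Fix $t\in\R$. Since $\rho$ is lower semicontinuous and quasiconvex, $\A^t$ is closed and convex, and monotonicity gives $\A^t+L^p_+\subseteq\A^t$, which forces the support function of $\A^t$ to be $+\infty$ off $\Y^\ast_+$. Hence, by the Hahn--Banach separation theorem (applied in $\Y$ with the norm topology when $p<+\infty$ and with $\sigma(L^\infty,L^1)$ when $p=+\infty$, so that $\A^t$ is genuinely closed in the relevant topology), one has $Y\in\A^t$ if and only if $\E[-VY]\leq\a(V,t)$ for all $V\in\Y^\ast_+\sm\cb{0}$, the direction $V=0$ contributing only the vacuous constraint. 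Substituting this into \eqref{recovery} gives
\[
\rho(Y)=\inf\cb{t\in\R\mid \E[-VY]\leq\a(V,t)\ \text{for all}\ V\in\Y^\ast_+\sm\cb{0}},
\]
and it remains to exchange this infimum with a supremum over $V$.

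That exchange is the crux, and it hinges only on the monotonicity of $t\mapsto\a(V,t)$. Writing $\beta(V,s):=\inf\cb{t\in\R\mid s\leq\a(V,t)}$ for the left-continuous pseudo-inverse of $\a(V,\cdot)$, one records the elementary implication $t>\beta(V,s)\Rightarrow s\leq\a(V,t)$. The inequality $\rho(Y)\geq\sup_{V}\beta(V,\E[-VY])$ is then immediate, since the feasible set of the displayed infimum is contained in $\cb{t\mid\E[-VY]\leq\a(V,t)}$ for each fixed $V$. For the reverse inequality, set $\bar t:=\sup_{V}\beta(V,\E[-VY])$; for any $t>\bar t$ and any $V$ we get $t>\beta(V,\E[-VY])$, hence $\E[-VY]\leq\a(V,t)$, so $t$ is feasible, and letting $t\downarrow\bar t$ yields $\rho(Y)\leq\bar t$. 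The same one-dimensional bookkeeping shows that replacing $\a$ by its left-continuous version $\a^{-}$ does not change the pseudo-inverse: $\a^{-}\leq\a$ gives one inclusion, and if $t>\beta(V,s)$ one interpolates $t'<t''<t$ with $s\leq\a(V,t')\leq\a^{-}(V,t'')$ to obtain the other.

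Next I would check that the $\beta$ just defined is a maximal risk function, i.e.\ satisfies (a)--(e) of Definition~\ref{maximalpen}. Properties (a) (monotone and left-continuous in $s$, a routine $\sup$--$\inf$ argument using that $\a(V,\cdot)$ is increasing) and (c) ($\beta(\lambda V,\lambda s)=\beta(V,s)$, since $\a(\lambda V,t)=\lambda\,\a(V,t)$ by definition of $\a$) are straightforward, and (d) follows because $\lim_{s\to-\infty}\beta(V,s)=\inf\cb{t\mid\A^t\neq\emptyset}$, which does not depend on $V$. For (b), joint quasiconcavity, I would use that $V\mapsto\a(V,t)$ is convex, being a supremum of affine functionals, together with monotonicity in $t$: if $t<\min\cb{\beta(V_1,s_1),\beta(V_2,s_2)}$ then $\a(V_i,t)<s_i$, whence $\a(\lambda V_1+(1-\lambda)V_2,t)<\lambda s_1+(1-\lambda)s_2$, which puts $t$ below $\beta$ evaluated at the convex combination. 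For (e), the equivalence recorded above gives, for every $c\in\R$, $\cb{V\mid\beta^{+}(V,s)<c}=\bigcup_{t<c}\cb{V\mid\a(V,t)>s}$, and each set on the right is open because $V\mapsto\a(V,t)=\sup_{Y\in\A^t}\ip{V,-Y}$ is $\sigma(\Y^\ast,\Y)$-lower semicontinuous; hence $\beta^{+}(\cdot,s)$ is upper semicontinuous.

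Finally, for uniqueness, I would argue that any maximal risk function $\tilde\beta$ representing $\rho$ must reproduce the support functions of the acceptance sets: from $\rho(Y)\leq t\iff\tilde\beta(V,\E[-VY])\leq t$ for all $V$ one reads off that $V\mapsto\sup\cb{s\mid\tilde\beta(V,s)\leq t}$ is the support function of the closed convex upward set $\A^t$, hence equals $\a(V,t)$; inverting once more, legitimately because of (a) and (b) for $\tilde\beta$, forces $\tilde\beta=\beta$. I expect the main obstacles to be the separation step in the weak$^\ast$ case (one must use that lower semicontinuity is meant with respect to $\sigma(L^\infty,L^1)$ so that $\A^t$ is genuinely weak$^\ast$-closed and the bipolar/separation machinery applies with predual $L^1$) and the semicontinuity and quasiconcavity verifications in the proof that $\beta$ is a maximal risk function; everything else reduces to monotone-inverse calculus in one real variable.
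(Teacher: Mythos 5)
The paper itself offers no proof of this proposition; it is imported verbatim from Drapeau and Kupper, so the only thing to compare against is the cited source, whose argument your sketch essentially reconstructs. The existence half of your proof is sound: the separation step (monotonicity confining the relevant normals to $-\Y^\ast_+$, with the caveat you correctly flag that in the $p=+\infty$ case closedness must be taken in $\sigma(L^\infty,L^1)$), the inf/sup exchange via one-dimensional pseudo-inverse calculus, the identification of the pseudo-inverses of $\a$ and $\a^{-}$, and the verification of properties (a)--(e) for the resulting $\b$ all go through; in particular the identity $\cb{V\mid\b^+(V,s)<c}=\bigcup_{t<c}\cb{V\mid\a(V,t)>s}$ is exactly the right way to deduce (e) from the weak lower semicontinuity of $V\mapsto\a(V,t)$.

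The gap is in the uniqueness step. From $\rho(Y)\leq t$ if and only if $\tilde{\b}(V,\E[-VY])\leq t$ for all $V$, you correctly obtain $\A^t=\bigcap_{V}\cb{Y\mid\E[-VY]\leq\tilde{\a}(V,t)}$ with $\tilde{\a}(V,t)\coloneqq\sup\cb{s\mid\tilde{\b}(V,s)\leq t}$, but this intersection identity only yields $\tilde{\a}(\cdot,t)\geq\a(\cdot,t)$, not equality: a family of halfspace bounds can cut out a closed convex set while exceeding its support function in some directions (relax one supporting halfspace of a ball; the remaining constraints still determine the ball). So ``is the support function of $\A^t$, hence equals $\a(V,t)$'' does not follow from the intersection identity alone. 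To close this you must invoke the structural properties of the competing maximal risk function: (c) gives positive homogeneity of $\tilde{\a}(\cdot,t)$, (b) gives its convexity (the strict epigraph $\cb{(V,s)\mid s>\tilde{\a}(V,t)}$ is the superlevel set $\cb{\tilde{\b}>t}$, a nested union of convex sets), and (e) gives its weak lower semicontinuity; only then does biconjugation identify the unique weak$^\ast$-lower semicontinuous, convex, positively homogeneous function cutting out $\A^t$ with the support function, i.e.\ $\tilde{\a}(\cdot,t)=\a(\cdot,t)$, after which your final left-inversion via (a) is legitimate. As written, your appeal to ``(a) and (b)'' skips the convexity and semicontinuity identification that makes the word ``maximal'' do its work.
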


\section{Portfolio optimization problem}\label{arbitrary}

In this section, we formulate and study a risk-averse portfolio optimization problem with two quasiconvex risk measures.

We suppose that there are $n$ risky assets with possibly correlated returns in a static model as in \cite{deniz}. To that end, let $X=(X_1,\ldots,X_n)^{\mathsf{T}}\in L_n^p$ be a random vector, where $p\in [1,+\infty]$. For each $i\in\cb{1,\ldots,n}$, the component $X_i$ denotes the return of the $i^{\text{th}}$ asset as a multiple of the initial price of that asset. In this static model, a portfolio is naturally defined as a vector $w\in\R^n$ with $\sum_{i=1}^n w_i=1$, where, for each $i\in\cb{1,\ldots,n}$, $w_i$ denotes the weight of the corresponding asset in the portfolio based on the asset prices at the beginning of the investment period. Hence, prohibiting shortselling, the set of all portfolios is defined as
\begin{equation}\label{defnw}
	\W\coloneqq\cb{w\in\R^n_+\mid \1^{\mathsf{T}}w = 1},
\end{equation}
where $\1=(1,\ldots,1)^{\mathsf{T}}\in\R^n$.
%When shortselling is not allowed, we will restrict ourselves to the portfolios in the subset
%\begin{equation}\label{posw}
%\W_+\coloneqq\W\cap \R^n_+,
%\end{equation}
%which is the $(n-1)$-dimensional unit simplex.
Given a portfolio $w\in\W$, note that the corresponding return $w^{\mathsf{T}}X$ is in the space $\Y=L^p$.

To model risk-aversion, suppose that we have two proper lower semicontinuous quasiconvex risk measures $\rho_1,\rho_2\colon L^p\to\bar{\R}$. The portfolio manager aims to choose a portfolio $w\in\W$ that minimizes the type 1 risk $\rho_1(w^{\mathsf{T}}X)$ while satisfying the type 2 risk constraint
\[
\rho_2(w^{\mathsf{T}}X)\leq r,
\]
where $r\in\R$ is a fixed threshold for this type of risk. Hence, the porfolio optimization problem of interest is formulated as
\begin{align*}
	&\text{minimize}\; \;\;                  \rho_1(w^{\mathsf{T}}X)\tag{$\P(r)$}\\
	&\text{subject to}\;\;                   \rho_2(w^{\mathsf{T}}X)\leq r \\
	& \quad\quad\quad\quad\;\;\;      w\in \W.
\end{align*}
We denote by $p(r)$ the optimal value of $(\P(r))$.

As in \cite{deniz}, the motivation to use two risk measures comes from decision making under two risk perceptions. For instance, the portfolio manager may choose the portfolio by using $\rho_1$ as the suitable risk measure for her risk perception but there might be an obligation to consider the opinion of a regulatory authority whose risk perception is encoded by $\rho_2$. In another setting, the portfolio manager may wish to work with two risk measures but the principle risk measure $\rho_1$ may have higher seniority than $\rho_2$. In particular, this framework covers as special cases the problem of maximizing expected return subject to a quasiconvex risk constraint (arbitrary $\rho_2$) if we take $\rho_1(Y)=\E\sqb{-Y}$ for each $Y\in L^p$, as well as the problem of minimizing a quasiconvex risk measure (arbitrary $\rho_1$) while maintaining a sufficiently high expected return if we take $\rho_2(Y)=\E\sqb{-Y}$ for each $Y\in L^p$. In the general case, while the previous work \cite{deniz} is restricted to coherent risk measures (e.g., negative expected value, average value-at-risk), the current work covers a much larger class of risk measures as it allows for both convex (translative) risk measures (e.g., entropic risk measure, utility-based shortfall risk measures) as well as quasiconvex and non-convex (non-translative) risk measures (e.g., certainty equivalent, economic index of riskiness). Some detailed examples will be studied in Section \ref{examples}.

For each $j\in\cb{1,2}$, let us denote by $\A_j^t$ 
%($\A^{t-}$) the (strict) 
the acceptance set of $\rho_j$ at level $t\in\R$ and by $\a_j$ 
%($\a_j^{-}$) 
the 
%(left-continuous version of the) 
minimal penalty function of $\rho_j$. In view of Proposition~\ref{dualrep}, there exists a unique maximal risk function $\b_j$ such that $\rho_j$ admits the dual representation
\begin{equation}\label{dualrep2}
	\rho_j(Y)=\sup_{V\in L^q_+\backslash\cb{0}}\b_j(V,\E\sqb{-VY}),\quad Y\in L^p,
\end{equation}
and we have $\beta_j(V,s)=\inf\cb{t\in\R\mid s\leq \a_j(V,t)}$ %=\inf\{t\in\R\mid s\leq \a^{-}_j(V,t)\}$ 
for every $V\in L^q_+$, $s\in\R$. Let us also define a function $g_{j}\colon\R^n\to\bar{\R}$ by
\begin{equation}\label{g1}
	g_j(w)= \rho_j(w^{\mathsf{T}}X), \quad w\in\R^n,
\end{equation}
which is a quasiconvex function.

We present the analysis of $(\P(r))$ in the next two sections.% The first one (Theorem \ref{thm1} in Section \ref{sect1}) associates to $(\P(r))$ a new problem that can be viewed as a dual problem, and shows that the optimal values of these two problems coincide. For the second main theorem (Theorem \ref{thm2} in Section \ref{sect2}), we assume that $\a_1$ satisfies an additional structural property (see Assumption \ref{tildealphaconcave}) under which the dual problem becomes a convex optimization problem, and show that an optimal Lagrange multiplier attached to the dual problem is an optimal solution of $(\P(r))$.

\section{Analysis of the problem when the principle risk measure is convex}\label{convexcase}

In this section, we analyze the porfolio optimization problem $(\P(r))$ under the following assumption.

\begin{assumption}\label{rho1convex}
	$\rho_1$ satisfies convexity.
\end{assumption}

Note that Assumption~\ref{rho1convex} does \emph{not} ensure that $\rho_1$ is a convex risk measure in general as the latter terminology assumes translativity in addition to convexity (see Section \ref{setup}).

\subsection{First main result: establishing strong duality}\label{sect1}

As a preparation for Theorem~\ref{thm1}, we first introduce an auxiliary problem that will show up in the derivation of the dual problem. For each $V_1,V_2\in L^q_+$, let us define
\begin{align}
	h^P(V_1,V_2)&\coloneqq \inf\cb{t+\E[-V_2X]^{\mathsf{T}}w\mid \a^{-}_1(V_1,t)\geq \E[-V_1X]^{\mathsf{T}}w,\ w\in\W,\ t\in\R}\label{auxh}.
\end{align}

For the first result of this section (Proposition~\ref{prop1} below), we need the following continuity assumption on the maximal risk function of $\rho_1$; note that it also appears as part of \citet[Assumption~3.2]{sigrid} in the context of a dynamic portfolio optimization problem. Recall \eqref{basecone} for the definition of $L^{q,1}_+$.

\begin{assumption}\label{additionalbeta}
	$\b_1$ is jointly (weakly) upper semicontinuous on $L^{q,1}_+\times\R$.
\end{assumption}

Before stating Proposition~\ref{prop1}, we introduce a restrictive finiteness assumption which will be removed later in Section \ref{sect2}; see Remark~\ref{finrem} for a discussion on this assumption.

\begin{assumption}\label{finiteness}
	For every $w\in\W$, $V_1\in L^{q,1}_+$ and $V_2\in L^q_+$, it holds $\b_1(V_1,\E[-V_1w^{\mathsf{T}}X])\in\R$ and $\a_2(V_2,r)\in\R$.
\end{assumption}

The next result establishes the connection between $(\P(r))$ and $h^P$.

\begin{proposition}\label{prop1}
	Suppose that Assumption~\ref{rho1convex}, Assumption~\ref{additionalbeta}, Assumption~\ref{finiteness} hold. Then,
	\[
	p(r)=\sup_{V_1,V_2\in L^q_+}\of{h^P(V_1,V_2) -\a_2(V_2,r)}.
	\]
\end{proposition}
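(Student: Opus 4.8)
The plan is to express the primal problem $(\P(r))$ using the dual representation \eqref{dualrep2} of $\rho_1$, turn the resulting inner supremum over $V_1$ into a joint optimization, and then use a minimax argument to swap the order of the resulting inf (over $w$, and auxiliary variables) and sup (over $V_1$), while handling the $\rho_2$-constraint by its minimal penalty function. Concretely, first I would rewrite the objective $g_1(w)=\rho_1(w^{\mathsf{T}}X)=\sup_{V_1\in L^q_+\sm\{0\}}\b_1(V_1,\E[-V_1 w^{\mathsf{T}}X])$, and by property (c) of Definition~\ref{maximalpen} restrict to $V_1\in L^{q,1}_+$. Using the pseudo-inverse identity $\b_1(V_1,s)=\inf\{t\in\R\mid s\le\a_1^{-}(V_1,t)\}$ from Proposition~\ref{dualrep}, I would introduce an epigraphic variable $t$ and write $\b_1(V_1,\E[-V_1 w^{\mathsf{T}}X])=\inf\{t\in\R\mid \a_1^{-}(V_1,t)\ge \E[-V_1X]^{\mathsf{T}}w\}$. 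This is exactly the structure appearing inside the definition \eqref{auxh} of $h^P$ (note $\E[-V_1 w^{\mathsf{T}}X]=\E[-V_1X]^{\mathsf{T}}w$), so that $\sup_{V_1}\b_1(V_1,\E[-V_1w^{\mathsf{T}}X])$ has a min-max form that, after interchange, yields the $+\E[-V_2X]^{\mathsf{T}}w$ penalty term once $\rho_2$ is dualized.

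Next I would handle the constraint $\rho_2(w^{\mathsf{T}}X)\le r$. By \eqref{recovery} this says $w^{\mathsf{T}}X\in\A_2^r$, and since $\A_2^r$ is closed and convex (here is where lower semicontinuity and quasiconvexity of $\rho_2$ enter) it equals the intersection of the half-spaces $\{Y\in\Y\mid \E[-V_2 Y]\le \a_2(V_2,r)\}$ over $V_2\in L^q_+$. Thus the constraint is equivalent to: $\E[-V_2X]^{\mathsf{T}}w\le\a_2(V_2,r)$ for all $V_2\in L^q_+$. Incorporating this into the primal problem and then interchanging the resulting $\inf_w$ with the two suprema over $V_1,V_2$ (via a Sion-type minimax theorem) is the crux: after the interchange one recognizes the inner infimum over $w$, $t$ (with the constraint $\a_1^{-}(V_1,t)\ge\E[-V_1X]^{\mathsf{T}}w$ and $w\in\W$, plus the relaxed $\rho_2$-inequality folded into the penalty via a Lagrangian/epigraphic manipulation) as precisely $h^P(V_1,V_2)-\a_2(V_2,r)$. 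I would carry out the minimax interchange on the set $\W$ (convex and compact in $\R^n$ — recall \eqref{defnw}) against $V_1\in L^{q,1}_+$ equipped with the weak topology $\sigma(\Y^\ast,\Y)$; compactness of $\W$ and concavity/upper semicontinuity of $V_1\mapsto\b_1(V_1,\cdot)$ (Assumption~\ref{additionalbeta}, together with quasiconcavity from Definition~\ref{maximalpen}(b)) and convexity/lower semicontinuity of $w\mapsto g_1(w)$ are what make Sion applicable; Assumption~\ref{rho1convex} ensures $g_1$ is genuinely convex so the interchange is against a convex inner problem.

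I expect the main obstacle to be the rigorous justification of the minimax interchange, because the $V_1$-argument ranges over an infinite-dimensional set and the function $\b_1$ may take the value $+\infty$ in general — this is exactly why Assumption~\ref{finiteness} is imposed here (to keep $\b_1(V_1,\E[-V_1w^{\mathsf{T}}X])$ and $\a_2(V_2,r)$ real-valued so no $\infty-\infty$ arises) and why Assumption~\ref{additionalbeta} is needed (to get the upper semicontinuity in the dual variable required by Sion's theorem). A secondary delicate point is the passage between $\a_1$ and its left-continuous version $\a_1^{-}$, and between $\A_1^t$ and the strict acceptance set $\A_1^{t-}$, when reading off the pseudo-inverse formula for $\b_1$; I would invoke the identity $\a_1^{-}(V,t)=\sup_{Y\in\A_1^{t-}}\E[-VY]$ recalled after \eqref{support} and the fact that $\b_1$ is the left-continuous pseudo-inverse of both $\a_1$ and $\a_1^{-}$ (Proposition~\ref{dualrep}) to make sure the epigraphic reformulation of $\b_1(V_1,\cdot)$ is exact and closed, so that the infimum defining $h^P$ is attained/approximated correctly. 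Once these technical points are settled, collecting terms gives the claimed formula $p(r)=\sup_{V_1,V_2\in L^q_+}\bigl(h^P(V_1,V_2)-\a_2(V_2,r)\bigr)$.
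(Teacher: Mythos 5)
Your proposal is correct and follows essentially the same route as the paper's proof: dualize the $\rho_2$-constraint through the support function of the closed convex acceptance set $\A_2^r$ (equivalently, write $I_{\A_2^r}$ as a conjugate), represent $\rho_1$ via its maximal risk function restricted to $L^{q,1}_+$, interchange $\inf_{w\in\W}$ and $\sup_{V_1,V_2}$ by Sion's minimax theorem (compactness of $\W$, Assumption~\ref{additionalbeta} for upper semicontinuity in $V_1$, Assumption~\ref{finiteness} for real-valuedness), and then read off the inner infimum as $h^P(V_1,V_2)-\a_2(V_2,r)$ via the pseudo-inverse formula for $\b_1$. The only cosmetic difference is that the paper introduces the epigraphic variable $t$ after the minimax swap rather than before, which does not change the argument.
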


\begin{proof}
	Let $I_{\A_2^r}$ be the convex analytic indicator function of $\A_2^r$, that is, $I_{\A_2^r}(Y)=0$ whenever $Y\in \A_2^r$ and $I_{\A_2^r}(Y)=+\infty$ whenever $Y\in L^p\setminus\A_2^r$. Then, using \citet[Theorem 2.3.3, (2.33)]{zalinescu}, Proposition \ref{dualrep} and property (c) in Definition \ref{maximalpen}, we obtain
	\begin{align*}
		p(r)&=\inf\cb{\rho_1(w^\mathsf{T}X)\mid \rho_2(w^{\mathsf{T}}X)\leq r,\ w\in\W}\\
		&=\inf\cb{\rho_1(w^\mathsf{T}X)\mid w^{\mathsf{T}}X\in\A_2^r,\ w\in\W}\\
		&=\inf_{w\in\W}\of{\rho_1(w^{\mathsf{T}}X)+I_{\A_2^r}(w^{\mathsf{T}}X)}\\
		&=\inf_{w\in\W}\of{\sup_{V_1\in L_+^{q,1}}\b_1\of{V_1,\E[-V_1w^{\mathsf{T}}X]}+\sup_{V_2\in L_+^q}\of{ \E[-V_2w^\mathsf{T}X]-\a_2(V_2,r)}}\\
		&=\inf_{w\in\W}\sup_{V_1\in L^{q,1}_+,V_2\in L^{q}_+}b(w,V_1,V_2),
	\end{align*}
	where
	\[
	b(w,V_1,V_2)\coloneqq \b_1\of{V_1,\E[-V_1w^{\mathsf{T}}X]}+ \E[-V_2w^\mathsf{T}X]-\a_2(V_2,r).
	\]
	For fixed $w\in\W, V_1\in L^{q,1}_+$, clearly $V_2\mapsto \E[-V_2w^\mathsf{T}X]-\a_2(V_2,r)$ is concave and upper semicontinuous by the properties of support function. Let $w\in\W, V_2\in L^{q}_+$. For each $V_1,V_1^\prime\in L^{q,1}_+$ and $\lambda\in (0,1)$, we have
	\begin{align*}
		&\b_1\of{\lambda V_1+(1-\lambda)V_1^\prime,\E[-(\lambda V_1+(1-\lambda)V_1^\prime)w^{\mathsf{T}}X]}\\
		&=\b_1\of{\lambda V_1+(1-\lambda)V_1^\prime,\lambda\E[- V_1w^{\mathsf{T}}X]+(1-\lambda)\E[- V^\prime_1w^{\mathsf{T}}X]}\\
		&\geq \min\cb{\b_1\of{V_1,\E[-V_1w^{\mathsf{T}}X]},\b_1\of{V^\prime_1,\E[-V^\prime_1w^{\mathsf{T}}X]}}
	\end{align*}
	by the joint quasiconcavity of $\b_1$. Hence, the function $V_1\mapsto \b_1\of{V_1,\E[-V_1w^{\mathsf{T}}X]}$ is quasiconcave. We claim that this function is also weakly upper semicontinuous. To that end, let $(V_\a)_{\a\in I}$ be a weakly convergent net in $L_+^{q,1}$ with some index set $I$ and limit $V$. Hence, $\lim_{\a\to\infty}\E[-V_\a w^{\mathsf{T}}X]=\E[-V w^{\mathsf{T}}X]$. So $(V_\a, \E[-V_\a w^{\mathsf{T}}X])_{\a\in I}$ is a weakly convergent net in $L^{q,1}_+\times\R$ with limit $(V,\E[-V w^{\mathsf{T}}X])$. By Assumption~\ref{additionalbeta}, we get
	\[
	\limsup_{\a\to\infty}\b_1(V_\a,\E[-V_\a w^{\mathsf{T}}X])\leq \b_1(V,\E[-V w^{\mathsf{T}}X]).
	\]
	Hence, the claim follows. Since $\beta_1$ is increasing and left-continuous in the second argument, it is also lower semicontinuous (indeed continuous by Assumption~\ref{additionalbeta}). Hence, for fixed $V_1\in L^{q,1}_1, V_2\in L^q_+$, the function $w\mapsto \b_1\of{V_1,\E[-V_1w^{\mathsf{T}}X]}$ is lower semicontinuous (indeed continuous) and quasiconvex. Note that $\W$ is a convex and compact set. Finally, by Assumption~\ref{finiteness}, $b$ is real-valued on $\W\times L^{q,1}_+\times L^{q}_+$. Therefore, by Sion's minimax theorem \cite[Corollary~3.3]{sion}, we may write
	\begin{align*}
		p(r)&=\sup_{V_1\in L^{q,1}_+, V_2\in L^q_+}\inf_{w\in\W}b(w,V_1,V_2)\\
		&=\sup_{V_1\in L^{q,1}_+, V_2\in L^q_+}\of{\inf_{w\in\W}\of{\b_1(V_1,\E[-V_1X]^{\mathsf{T}}w) +\E[-V_2X]^{\mathsf{T}}w}-\a_2(V_2,r)}\\
		&=\sup_{V_1, V_2\in L^q_+}\of{\inf_{w\in\W}\of{\b_1(V_1,\E[-V_1X]^{\mathsf{T}}w) +\E[-V_2X]^{\mathsf{T}}w}-\a_2(V_2,r)}\\
		&=\sup_{V_1\in V_2\in L^q_+}\of{\inf_{w\in\W}\of{\inf\cb{t\in\R\mid \a^-_1(V_1,t)\geq \E[-V_1X]^{\mathsf{T}}w} +\E[-V_2X]^{\mathsf{T}}w}-\a_2(V_2,r)}\\
		&=\sup_{V_1,V_2\in L^{q}_+}\of{h^P(V_1,V_2) -\a_2(V_2,r)},
	\end{align*}
	where $h^P(V_1,V_2)$ is defined by \eqref{auxh}.
\end{proof}

\begin{remark}\label{finrem}
	In the proof of Proposition~\ref{prop1}, the only role of Assumption~\ref{finiteness} is to ensure that the function $b$ has finite values so that Sion's minimax theorem can be applied. In cases where Assumption~\ref{finiteness} is not valid, the proof of Proposition~\ref{prop1} can be seen as a heuristic argument to come up with a dual formulation of $(p(r))$. Thanks to Theorem~\ref{thm2} below, it turns out that the conclusion of Proposition~\ref{prop1} is still valid without Assumption~\ref{finiteness}.
\end{remark}

Let $V_1,V_2\in L^q_+$. Note that $h^P(V_1,V_2)$ is the optimal value of a finite-dimensional optimization problem which is in general nonconvex due to the inequality constraint when $\rho_1$ is only assumed to be a quasiconvex risk measure. However, under Assumption~\ref{rho1convex}, as argued in the proof of \citet[Proposition~6]{drapeaukupper}, the function $t\mapsto \a_1^-(V_1,t)$ is concave and the optimization problem in $h^P(V_1,V_2)$ becomes convex. Before proceeding further, we introduce an additional assumption related to the asymptotic behavior of $\a_1^-$. %However, as it will be clear in Section \ref{examples}, this assumption is violated by many famous examples of quasiconvex risk measures that are not convex. Instead, most of these examples and also all convex (translative) risk measures satisfy the following assumption.

\begin{assumption}\label{assmpalpha}
	For each $V_1\in L^q_+$, it holds
	\[
	\lim_{t\rightarrow\infty}\a^-_1(V_1,t)=+\infty.
	\]
\end{assumption}

To formulate the next proposition, let us define
\[
\a^\ast_1(V_1,z)\coloneqq\inf_{t\in\R}\of{tz-\a^-_1(V_1,t)}=\inf_{t\in\R}\of{tz-\a_1(V_1,t)},\quad V_1\in L^q_+, z\in\R.
\]
Note that $z\mapsto -\a_1^\ast(V_1,-z)$ is the conjugate function of $\a_1^-$ (and also of $\a_1$) with respect to the second variable. Since $t\mapsto \a_1^{-}(V_1,t)$ is an increasing function, it is easy to check that $\a_1^{\ast}(V_1,z)=-\infty$ whenever $z<0$. Moreover, under Assumption~\ref{rho1convex}, for each $V_1\in L^q_+$, the function $t\mapsto \a_1^-(V_1, t)$ is concave and upper semicontinuous so that Fenchel-Moreau theorem \cite[Theorem~2.3.3]{zalinescu} gives
\begin{equation}\label{fm}
	\a_1^-(V_1,t)=\inf_{z\geq 0}\of{tz-\a_1^\ast(V_1,z)},\quad V_1\in L^q_+, t\in\R.
\end{equation}
The following special value of $\a_1^\ast$ will play an important role in the dual problem of $(\P(r))$:
\begin{equation}\label{tildealpha}
	\tilde{\a}_1(V_1)\coloneqq \a_1^\ast(V_1,1)=\inf_{t\in\R}\of{t-\alpha^-_1(V_1,t)},\quad V_1\in L^q_+.
\end{equation}
Using this quantity, let us define, for each $V_1,V_2\in L^q_+$,
\[
h^D(V_1,V_2)\coloneqq \sup\cb{\tilde{\a}_1(xV_1)-y\mid x\E[V_1X]+\E[V_2X]\leq y\1,\ x> 0,\ y\in\R},
\]
where the inequality constraint is understood in the componentwise manner. As the proof of the next result shows, the dual of the problem in $h^P(V_1,V_2)$ gives rise to $h^D(V_1,V_2)$.

%\begin{assumption}\label{slater}
%	For every $V_1\in L^q_+$, there exists $w\in\W$ and $t\in\R$ such that $\a_1^-(V,t)>\E[-V_1X]^{\mathsf{T}}w$.
%	\end{assumption}

\begin{proposition}\label{prop2}
	Suppose that Assumption~\ref{rho1convex} and Assumption~\ref{assmpalpha} hold, and let $V_1,V_2\in L^q_+$. Then,
	\[
	h^P(V_1,V_2)=h^D(V_1,V_2).
	\]
\end{proposition}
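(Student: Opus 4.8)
The plan is to recognize $h^P(V_1,V_2)$ as a convex optimization problem in the variables $(w,t)\in\W\times\R$ and to dualize the single inequality constraint $\a_1^-(V_1,t)\ge \E[-V_1X]^{\mathsf T}w$ via Lagrangian duality. First I would fix $V_1,V_2\in L^q_+$ and rewrite
\[
h^P(V_1,V_2)=\inf_{w\in\W,\,t\in\R}\Bigl\{t+\E[-V_2X]^{\mathsf T}w \;\Bigm|\; \a_1^-(V_1,t)-\E[-V_1X]^{\mathsf T}w\ge 0\Bigr\}.
\]
Under Assumption~\ref{rho1convex}, the map $t\mapsto\a_1^-(V_1,t)$ is concave and upper semicontinuous (as recalled before the proposition, following the proof of \citet[Proposition~6]{drapeaukupper}), so the constraint function $(w,t)\mapsto \E[-V_1X]^{\mathsf T}w-\a_1^-(V_1,t)$ is convex; the objective is affine and $\W$ is a nonempty convex compact polytope, so this is a genuine convex program. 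I would then introduce a multiplier $x\ge 0$ for the inequality constraint and form the Lagrangian
\[
L(w,t,x)=t+\E[-V_2X]^{\mathsf T}w + x\bigl(\E[-V_1X]^{\mathsf T}w-\a_1^-(V_1,t)\bigr),
\]
so that $h^P(V_1,V_2)=\inf_{w,t}\sup_{x\ge0}L(w,t,x)$, and argue strong duality to swap inf and sup.

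Next I would compute $\inf_{w\in\W,t\in\R}L(w,t,x)$ for fixed $x$. The infimum separates: the $w$-part is $\inf_{w\in\W}\bigl(-x\E[V_1X]-\E[V_2X]\bigr)^{\mathsf T}w$, and since $\W$ is the standard simplex this equals $\min_i\bigl(-x\E[V_1X_i]-\E[V_2X_i]\bigr) = -\inf\{y\in\R\mid x\E[V_1X]+\E[V_2X]\le y\1\}$, i.e. it is exactly $-y^\ast$ where $y^\ast$ is the smallest $y$ feasible for the linear inequality appearing in $h^D$. The $t$-part is $\inf_{t\in\R}\bigl(t-x\,\a_1^-(V_1,t)\bigr)$. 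For $x>0$ this is $x\inf_{t}\bigl(\tfrac1x t-\a_1^-(V_1,t)\bigr)=x\,\a_1^\ast(V_1,\tfrac1x)$; using property (c) of the minimal penalty function — positive homogeneity $\a_1(xV_1,xt)=\a_1(V_1,t)$, hence $\a_1^-(xV_1,\cdot)$ rescales accordingly — one checks $x\,\a_1^\ast(V_1,\tfrac1x)=\a_1^\ast(xV_1,1)=\tilde\a_1(xV_1)$. For $x=0$ the $t$-infimum is $\inf_t t=-\infty$, which is why only $x>0$ contributes. Collecting the two pieces, $\inf_{w,t}L(w,t,x)=\tilde\a_1(xV_1)-y^\ast$ for $x>0$, and taking the supremum over $x>0$ (equivalently, writing $y^\ast$ as an infimum over feasible $y$ and merging the two suprema) yields precisely $h^D(V_1,V_2)$. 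Here Assumption~\ref{assmpalpha}, $\lim_{t\to\infty}\a_1^-(V_1,t)=+\infty$, is what guarantees $\tilde\a_1(xV_1)=\a_1^\ast(xV_1,1)$ is not identically $-\infty$ and that the $t$-infimum behaves well; it also (together with $\a_1^\ast(V_1,z)=-\infty$ for $z<0$, already noted) pins down the relevant domain.

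The main obstacle is justifying the exchange of infimum and supremum, i.e. establishing zero duality gap for the convex program defining $h^P$ without an a priori Slater-type condition. I would handle this either by a direct Sion-type minimax argument — noting $L(\cdot,\cdot,x)$ is convex in $(w,t)$ and affine (hence concave and continuous) in $x$, with $\W$ compact, though the unboundedness of the $t$-variable requires care — or, more robustly, by the standard Fenchel/Lagrangian duality theorem for convex programs with a single inequality constraint after checking that the optimal value function (perturbation function) $\phi(\varepsilon)=\inf\{t+\E[-V_2X]^{\mathsf T}w\mid \a_1^-(V_1,t)-\E[-V_1X]^{\mathsf T}w\ge-\varepsilon,\ w\in\W\}$ is convex and lower semicontinuous at $0$; Assumption~\ref{assmpalpha} is exactly what keeps $\phi$ proper and prevents the degenerate $-\infty$ behavior. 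A clean alternative that sidesteps perturbation analysis is to plug the Fenchel--Moreau representation \eqref{fm}, $\a_1^-(V_1,t)=\inf_{z\ge0}(tz-\a_1^\ast(V_1,z))$, directly into the constraint of $h^P$, turning it into $\inf_{w\in\W,t\in\R}\sup_{z\ge0}\{\,\cdots\}$ and then interchanging there; I expect the write-up to follow whichever of these routes keeps the technical overhead lightest, with the bookkeeping around $x=0$ versus $x>0$ and the rescaling identity $x\,\a_1^\ast(V_1,\tfrac1x)=\tilde\a_1(xV_1)$ being the only other points needing explicit verification.
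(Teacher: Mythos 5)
Your proposal follows essentially the same route as the paper: view $h^P(V_1,V_2)$ as a convex program (concavity of $t\mapsto\a_1^-(V_1,t)$ under Assumption~\ref{rho1convex}), form the Lagrangian, discard $x=0$ because the $t$-infimum is $-\infty$ there, and identify the $t$-part with $\tilde\a_1(xV_1)$ and the $w$-part with the linear inequality $x\E[V_1X]+\E[V_2X]\le y\1$. The only cosmetic difference is that you keep $w\in\W$ explicit and evaluate the minimum over the simplex directly, whereas the paper also dualizes the constraint $\1^{\mathsf T}w=1$ with the multiplier $y$ and keeps only $w\in\R^n_+$; these are equivalent.

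Two points to fix. First, the step you single out as the ``main obstacle'' --- justifying the exchange of infimum and supremum ``without an a priori Slater-type condition'' --- is not an obstacle at all: Assumption~\ref{assmpalpha} hands you a Slater point directly. Pick any $w^0\in\W$; since $\lim_{t\to\infty}\a_1^-(V_1,t)=+\infty$, there is $t^0$ with $\a_1^-(V_1,t^0)>\E[-V_1X]^{\mathsf T}w^0$, so the inequality constraint is strictly satisfied and standard strong duality for convex programs applies. This is exactly how the paper proceeds; your proposed detours through the perturbation function or a Sion-type argument (where the unbounded $t$-variable would indeed cause headaches) are unnecessary. Second, the homogeneity identity you invoke is misquoted: the relation $\beta(\lambda V,\lambda s)=\beta(V,s)$ is property (c) of the \emph{maximal risk function}, not of the minimal penalty function; what you actually need is the degree-one homogeneity of the support function in its first argument, $\a_1^-(xV_1,t)=x\,\a_1^-(V_1,t)$ for $x>0$, which gives $\inf_t\bigl(t-x\a_1^-(V_1,t)\bigr)=\inf_t\bigl(t-\a_1^-(xV_1,t)\bigr)=\tilde\a_1(xV_1)$. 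Your conclusion $x\,\a_1^\ast(V_1,\tfrac1x)=\tilde\a_1(xV_1)$ is correct, but the identity $\a_1(xV_1,xt)=\a_1(V_1,t)$ you cite for it is false in general.
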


\begin{proof}
	As noted above, the problem defining $h^P(V_1,V_2)$ is a convex optimization problem under Assumption~\ref{rho1convex}. By Assumption~\ref{assmpalpha}, there exists $w^0\in\W$ and $t^0\in\R$ such that $\a_1^-(V_1,t^0)>\E[-V_1X]^{\mathsf{T}}w^0$ (indeed, for every $w^0\in\W$ such $t^0\in\R$ exists). Hence, Slater's condition holds for this convex optimization problem and we have
	\begin{align*}
		h^P(V_1,V_2)&=\sup_{\substack{x\geq 0,\\ y\in\R}}\inf_{\substack{w\in\R^n_+,t\in\R:\\ 
				\a^-_1(V_1,t)\in\R}}\of{t+\E[-V_2X]^{\mathsf{T}}w+x(-\a_1^-(V_1,t)+\E[-V_1X]^{\mathsf{T}}w)+y(\1^{\mathsf{T}}w-1)}\\
		&=\sup_{\substack{x\geq 0,\\ y\in\R}}\inf_{\substack{w\in\R^n_+,t\in\R:\\ 
				\a^-_1(V_1,t)\in\R}}\of{t-x\a_1^-(V_1,t)+(x\E[-V_1X]+\E[-V_2X]+y\1)^{\mathsf{T}}w-y}\\
		&=\sup_{\substack{x> 0,\\ y\in\R}}\inf_{\substack{w\in\R^n_+,t\in\R:\\ 
				\a^-_1(V_1,t)\in\R}}\of{t-\a_1^-(xV_1,t)+(x\E[-V_1X]+\E[-V_2X]+y\1)^{\mathsf{T}}w-y}\\
		&=\sup_{\substack{x> 0,\\ y\in\R}}\cb{\tilde{\a}_1(xV_1)-y\mid x\E[V_1X]+\E[V_2X]\leq y\1}.
	\end{align*}
	Let us justify each passage in the above derivation: the first equality is by strong duality for convex optimization, the second is by simple manipulations, the third excludes the case $x=0$ from further consideration since in this case the infimum yields $-\infty$, the fourth is by evaluating the infimum with respect to $w\in\R^n_+$ and $t\in\R$. Therefore, $h^P(V_1,V_2)=h^D(V_1,V_2)$.
\end{proof}

We are ready to prove the first main result of the paper, which establishes strong duality between $(\P(r))$ and a new problem, which we refer to as the \emph{dual problem} of $(\P(r))$.

\begin{theorem}\label{thm1}
	Suppose that Assumption~\ref{rho1convex}, Assumption~\ref{additionalbeta}, Assumption~\ref{finiteness}, Assumption~\ref{assmpalpha} hold, and consider the problem
	\begin{align*}
		%		&\text{maximize}\; \;               t-\a_1(V_1,t)-\a_2(V_2,r)-y\tag{$\D(r)$}\\
		&\text{maximize}\; \;               \tilde{\a}_1(V_1)-\a_2(V_2,r)-y\tag{$\D(r)$}\\
		&\text{subject to}\;\;                   \E[V_1X]+\E[V_2X]\leq y\1\\
		%	& \quad\quad\quad\quad\;\;\;  \tilde{\a}_1(V_1)\in\R \\
		& \quad\quad\quad\quad\;\;\;      V_1, V_2\in L^q_+,\; y\in\R,
	\end{align*}
	where the inequality constraint is understood in the componentwise sense. Then, $(\P(r))$ and $(\D(r))$ have the same optimal value $p(r)$.
\end{theorem}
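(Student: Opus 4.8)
The plan is to obtain Theorem~\ref{thm1} simply by concatenating Proposition~\ref{prop1} and Proposition~\ref{prop2} and then performing a single change of variables. Since Assumption~\ref{rho1convex}, Assumption~\ref{additionalbeta} and Assumption~\ref{finiteness} are in force, Proposition~\ref{prop1} gives
\[
p(r)=\sup_{V_1,V_2\in L^q_+}\of{h^P(V_1,V_2)-\a_2(V_2,r)},
\]
and since Assumption~\ref{rho1convex} and Assumption~\ref{assmpalpha} are in force, Proposition~\ref{prop2} gives $h^P(V_1,V_2)=h^D(V_1,V_2)$ for every $V_1,V_2\in L^q_+$. Substituting the latter into the former, unfolding the definition of $h^D$, and noting that $\a_2(V_2,r)$ does not depend on the inner variables $x,y$, I would merge the two nested suprema into one and obtain
\[
p(r)=\sup\cb{\tilde{\a}_1(xV_1)-y-\a_2(V_2,r)\mid x\E[V_1X]+\E[V_2X]\leq y\1,\ V_1,V_2\in L^q_+,\ x>0,\ y\in\R}.
\]

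The remaining step is to eliminate the scalar variable $x$ by absorbing it into $V_1$. Setting $W_1\coloneqq xV_1$, the objective depends on $(V_1,x)$ only through $W_1$, and so does the constraint, because $x\E[V_1X]=\E[(xV_1)X]=\E[W_1X]$ by linearity of the expectation. Since $L^q_+$ is a convex cone, the map $(V_1,x)\mapsto xV_1$ sends $L^q_+\times(0,\infty)$ onto $L^q_+$ (any $W_1\in L^q_+$ is attained with $x=1$, $V_1=W_1$), so the supremum is unchanged when the pair $(V_1,x)$ with $x>0$ is replaced by a single free variable $W_1\in L^q_+$. Renaming $W_1$ back to $V_1$, this is exactly the optimal value of $(\D(r))$, which proves the claim.

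There is no genuine obstacle left at this point: the analytic content — the minimax argument behind Proposition~\ref{prop1} and the Lagrangian/Fenchel duality behind Proposition~\ref{prop2} — has already been carried out, and what remains is bookkeeping. The only passage that deserves a line of justification is the final reparametrization: dropping the strict requirement $x>0$ in favour of the weaker requirement $W_1\in L^q_+$ is legitimate precisely because the image of $(0,\infty)\times L^q_+$ under scalar multiplication already exhausts $L^q_+$ (the value $W_1=0$ being attained at $x=1$, $V_1=0$), so no limiting behaviour as $x\downarrow 0$ has to be examined. It is also worth remarking that the three auxiliary quantities $h^P$, $h^D$ and the outer supremum over $(V_1,V_2)$ telescope into the single optimization problem $(\D(r))$, which is what makes $(\D(r))$ the natural dual of $(\P(r))$.
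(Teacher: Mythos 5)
Your proposal is correct and follows essentially the same route as the paper: the paper likewise concatenates Proposition~\ref{prop1} and Proposition~\ref{prop2}, merges the nested suprema, and then absorbs the scalar $x>0$ into $V_1$ via the substitution $V_1\mapsto xV_1$, using that this map carries $L^q_+\times(0,\infty)$ onto $L^q_+$. Your explicit remark that $W_1=0$ is attained with $x=1$, $V_1=0$ (so no limit $x\downarrow 0$ needs to be examined) is a small clarification the paper leaves implicit.
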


\begin{proof}
	Combining Proposition~\ref{prop1} and Proposition~\ref{prop2}, we obtain
	\begin{align*}
		p(r) 
		&=\sup_{V_1,V_2\in L^q_+}\of{h^D(V_1,V_2) -\a_2(V_2,r)}\\
		&=\sup_{V_1,V_2\in L^q_+}\of{\sup\cb{\tilde{\a}_1(xV_1)-y\mid \E[xV_1X]+\E[V_2X]\leq y\1,\ x>0,\ y\in\R} -\a_2(V_2,r)}\\
		&=\sup\cb{\tilde{\a}_1(xV_1)-\a_2(V_2,r)-y\mid  \E[xV_1X]+\E[V_2X]\leq y\1,  V_1,V_2\in L^q_+, x> 0, y\in\R}\\
		&=\sup\cb{\tilde{\a}_1(V_1)-\a_2(V_2,r)-y\mid  \E[V_1X]+\E[V_2X]\leq y\1,\  V_1,V_2\in L^q_+,\ y\in\R},
		%  &=\sup\cb{t-\a_1(V_1,t)-\a_2(V_2,r)-y\mid  \E[V_1X]+\E[V_2X]\leq y\1,\ V_1,V_2\in L^q_+,\ t,y\in\R},
	\end{align*}
	which coincides with the optimal value of $(\D(r))$.
\end{proof}

It is worth noting that the dual problem $(\D(r))$ is a convex optimization problem thanks to Assumption~\ref{rho1convex}.
%in general since $\tilde{\a}_1$ may fail to be a concave function and $\{V_1\in L^q_+\mid \tilde{\a}_1(V_1)\in\R\}$ may fail to be a convex set. This is as expected due to the non-convex nature of the primal problem $(\P(r))$. Therefore, solving $(\D(r))$ in practice might be a challenging task without further structure. In the next subsection, we exploit a case where $(\D(r))$ is a convex optimization problem by imposing additional assumptions on $\tilde{\a}_1$.

\subsection{Second main result: establishing optimality}\label{sect2}

While Theorem~\ref{thm1} provides strong duality between $(\P(r))$ and $(\D(r))$, it does not make a statement on how to find an optimal portfolio $w^\ast\in\W$ for $(\P(r))$. The aim of Theorem~\ref{thm2}, the second main result of the paper, is to find such $w^\ast$ in relation to $(\D(r))$.

As a preparation, we recall some well-known concepts and facts from convex analysis.
% as well as some lesser-known tools of quasiconvex analysis. 
To that end, let us fix an arbitrary Hausdorff locally convex topological vector space $\X$ with topological dual $\X^\ast$ and bilinear duality mapping $\ip{\cdot,\cdot}\colon\X^\ast\times\X\to\R$. For our purposes, the following special cases of $\X$ are particularly important:
\begin{enumerate}[(i)]
	\item $\X=\R^n$ with the usual topology, which yields $\X^\ast=\R^n$ together with $\ip{z,x}=z^{\mathsf{T}}x$ for every $x, z\in\R^n$.
	\item $\X=L^q$ with $q\in [1,+\infty)$ with the weak topology $\sigma(L^q,L^p)$, which yields $\X^\ast=L^p$ together with $\ip{Y,U}=\E\sqb{UY}$ for every $U\in L^q$, $Y\in L^p$.
	\item $\X=L^\infty$ with the weak topology $\sigma(L^\infty,L^1)$, which yields $\X^\ast=L^1$ together with $\ip{Y,U}=\E\sqb{UY}$ for every $U\in L^\infty$, $Y\in L^1$.
\end{enumerate}

Consider a set $A\subseteq \X$. The function $I_A\colon\X\to\R\cup\cb{+\infty}$ defined by
\[
I_A(x)=\begin{cases}0 & \text{if }x\in A,\\
	+\infty &\text{if }x\in \X\sm A,
\end{cases}
\]
is called the \emph{indicator function} of $A$; note that $A$ is a convex set if and only if $I_A$ is a convex function. For a point $x\in A$, %$\cone(A)\coloneqq \cb{\lambda x\mid \lambda\geq 0, x\in A}$ is called the \emph{conic hull} of $A$. If $A$ is convex, then $\cone(A)$ is a convex cone. 
%For $x\in A$, 
the convex cone
\[
\mathcal{N}(A,x)\coloneqq\cb{z\in\X^\ast\mid \forall x^\prime\in A\colon \ip{z,x}\geq \ip{z,x^\prime}}
\]
is called the \emph{normal cone} of $A$ at $x$.  Let $g\colon \X\to\R\cup\cb{+\infty}$ be a function. Given $x\in \X$, the set 
\[
\partial g(x)\coloneqq\cb{z\in\X^\ast\mid \forall x^\prime\in \X\colon g(x^\prime)\geq g(x)+\ip{z,x^\prime-x}}
\]
is called the \emph{subdifferential} of $g$ at $x$. If $A$ is a nonempty convex set, then by \citet[Section~2.4]{zalinescu}, $\partial I_A(x)=\N_A(x)$ for every $x\in A$, and $\partial I_A(x)=\emptyset$ for every $x\in\X\sm A$. The function $g^\ast\colon\X^\ast\to\bar{\R}$ defined by $g^\ast(z)\coloneqq\sup_{x\in\X}\of{\ip{z,x}-g(x)}$ for each $z\in\X^\ast$ is called the \emph{conjugate function} of $g$. We have
\begin{equation}\label{conjsub}
	z\in\partial g(x) \quad \Leftrightarrow\quad x\in\partial g^{\ast}(z)
\end{equation}
for every $x\in\X,z\in\X^\ast$ such that $g$ is lower semicontinuous at $x$. If $A$ is a nonempty closed convex set, then it is well-known that $\sigma_A\coloneqq (I_A)^\ast$ is the support function of $A$ defined by
\[
\sigma_A(z)\coloneqq \sup_{x\in A}\ip{z,x},\quad z\in\X^\ast.
\]
From the above definitions, it is clear that, for a point $x\in A$, we have
\begin{equation}\label{normalch}
	\N(A,x)=\cb{z\in\X^\ast\mid \sigma_A(z)=\ip{z,x}}.
\end{equation}

Consider the problem of minimizing $g$ over $A$. Suppose that $g$ is convex and let $x\in A$ with $g(x)<+\infty$. By Pshenichnyi-Rockafellar theorem \cite[Theorem~2.9.1]{zalinescu}, if
\begin{equation}\label{PRthm}
	\partial g(x)\cap -\N(A,x)\neq\emptyset,
\end{equation}
then $x$ is a minimizer of $g$ over $A$, that is, $g(x)=\inf_{x^\prime\in A}g(x^\prime)$; the converse also holds if $g$ is continuous at $x$.

The next lemma is devoted to the calculation of a certain subdifferential that is relevant to $(\P(r))$.

\begin{lemma}\label{GPcalc}
	Suppose that Assumption~\ref{rho1convex} holds. Let $w\in\R^n$ be such that $g_1(w)=\rho_1(w^{\mathsf{T}}X)<+\infty$. Then,
	\[
	\cb{\E\sqb{-VX}\mid \b_1(V,-w^{\mathsf{T}}\E[VX])=g_1(w),\ V\in L^q_+ }\subseteq\partial g_1(w).
	\]
\end{lemma}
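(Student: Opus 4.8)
The plan is to show that any element of the left-hand set serves as a subgradient of $g_1$ at $w$, i.e. for fixed $V\in L^q_+$ with $\b_1(V,-w^{\mathsf{T}}\E[VX])=g_1(w)$ we must verify
\[
g_1(w')\geq g_1(w)+\E[-VX]^{\mathsf{T}}(w'-w)\quad\text{for all }w'\in\R^n.
\]
First I would invoke the dual representation \eqref{dualrep2}, which gives $g_1(w')=\rho_1(w'^{\mathsf{T}}X)=\sup_{U\in L^q_+\setminus\{0\}}\b_1(U,\E[-Uw'^{\mathsf{T}}X])\geq \b_1(V,\E[-Vw'^{\mathsf{T}}X])=\b_1(V,-w'^{\mathsf{T}}\E[VX])$ (the case $V=0$ is vacuous since then the attainment condition and Assumption~\ref{assmpalpha}-type monotonicity would force a contradiction, or one notes $0$ is excluded because $\b_1(0,\cdot)$ cannot equal the finite value $g_1(w)$ under the standing conventions; in any case we may assume $V\neq 0$). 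So it suffices to prove
\[
\b_1\bigl(V,-w'^{\mathsf{T}}\E[VX]\bigr)\geq \b_1\bigl(V,-w^{\mathsf{T}}\E[VX]\bigr)+\E[-VX]^{\mathsf{T}}(w'-w).
\]

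The key step is then a one-dimensional convexity/subgradient argument applied to the function $s\mapsto \b_1(V,s)$. Under Assumption~\ref{rho1convex}, as recalled in the excerpt (following the proof of \citet[Proposition~6]{drapeaukupper}), $t\mapsto\a_1^-(V,t)$ is concave; since $\b_1(V,\cdot)$ is its left-continuous pseudo-inverse (Proposition~\ref{dualrep}), the map $s\mapsto\b_1(V,s)$ is convex and increasing. Writing $s=-w^{\mathsf{T}}\E[VX]$ and $s'=-w'^{\mathsf{T}}\E[VX]$, so that $s'-s=\E[-VX]^{\mathsf{T}}(w'-w)$, the desired inequality becomes
\[
\b_1(V,s')\geq \b_1(V,s)+1\cdot(s'-s),
\]
i.e. the claim that $1$ is a subgradient of the convex increasing function $\b_1(V,\cdot)$ at the point $s$. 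This is exactly where the hypothesis $\b_1(V,s)=g_1(w)=\rho_1(w^{\mathsf{T}}X)$ enters: it says $s$ is a point where $\b_1(V,\cdot)$ attains the value $\rho_1(w^{\mathsf{T}}X)=\sup_U\b_1(U,\E[-Uw^{\mathsf{T}}X])$, and by the dual-representation/pseudo-inverse machinery $\tilde\a_1(V)=\a_1^\ast(V,1)=\inf_t(t-\a_1^-(V,t))$ together with \eqref{fm} shows that the conjugate of $\b_1(V,\cdot)$ at the slope $1$ equals $-\tilde\a_1(V)$ and is attained precisely at (and only at) such points $s$ where $\b_1$ meets $\rho_1(w^{\mathsf T}X)$; equivalently $s\in\partial\b_1(V,\cdot)^*(1)$, which by \eqref{conjsub} is $1\in\partial\b_1(V,\cdot)(s)$. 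I would spell this out via \eqref{fm}: $\a_1^-(V,t)=\inf_{z\geq 0}(tz-\a_1^\ast(V,z))\leq t-\tilde\a_1(V)$, hence $\b_1(V,s)=\inf\{t\mid s\leq\a_1^-(V,t)\}\geq \inf\{t\mid s\leq t-\tilde\a_1(V)\}=s+\tilde\a_1(V)$, with equality forced at the given $s$ by the attainment hypothesis, delivering the affine minorant $\b_1(V,s')\geq s'+\tilde\a_1(V)=\b_1(V,s)+(s'-s)$.

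Putting the pieces together yields $g_1(w')\geq\b_1(V,s')\geq\b_1(V,s)+(s'-s)=g_1(w)+\E[-VX]^{\mathsf{T}}(w'-w)$, which is the subgradient inequality, so $\E[-VX]\in\partial g_1(w)$. The main obstacle I anticipate is the careful handling of the pseudo-inverse relationship and the edge cases: making rigorous that the attainment condition $\b_1(V,s)=g_1(w)$ (which is really a statement about the supremum in \eqref{dualrep2} being attained at $V$) translates into the equality $\b_1(V,s)=s+\tilde\a_1(V)$, and dealing with possible $+\infty$ values of $\a_1^-$ or $\b_1$ off the relevant region (where Assumption~\ref{assmpalpha} and the increasing/left-continuity properties in Definition~\ref{maximalpen}(a) keep things controlled). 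Monotonicity is not separately needed here since $\E[-VX]$ already lands in the right dual cone by construction; the whole argument is the one-dimensional convex-duality identity between $\b_1(V,\cdot)$ and $\a_1^-(V,\cdot)$.
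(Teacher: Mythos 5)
Your route is genuinely different from the paper's: you verify the subgradient inequality for $g_1$ directly, via the dual representation plus a one-dimensional affine-minorant argument for $s\mapsto\b_1(V,s)$, whereas the paper composes the chain-rule inclusion $\cb{\E[VX]\mid V\in\partial\rho_1(w^{\mathsf{T}}X)}\subseteq\partial g_1(w)$ with the asserted identity \eqref{GP2} between $\partial\rho_1$ and the set of maximizers in \eqref{dualrep2}. Everything up to and including the minorant $\b_1(V,s')\geq s'+\tilde{\a}_1(V)$ is correct (and needs only $\tilde{\a}_1(V)\leq t-\a_1^-(V,t)$, not the full strength of \eqref{fm}), and you have correctly isolated the crux: one must show that the attainment hypothesis $\b_1(V,s)=g_1(w)$ with $s=\E[-Vw^{\mathsf{T}}X]$ forces the touching condition $\b_1(V,s)=s+\tilde{\a}_1(V)$.

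That is exactly the step that fails, and it is not a repairable technicality: the implication is false as stated. By Definition~\ref{maximalpen}(c) the set of maximizers $\cb{V\mid \b_1(V,\E[-Vw^{\mathsf{T}}X])=g_1(w)}$ is invariant under $V\mapsto\lambda V$, $\lambda>0$, so the set on the left of the lemma is a union of rays $\cb{\lambda\E[-VX]\mid\lambda>0}$, while a finite convex $g_1$ cannot have a subdifferential containing a nontrivial ray. Concretely, for the entropic risk measure one has $\b_1(V,s)=(s-\g_1(V))/\E[V]$, so at $Y=w^{\mathsf{T}}X=0$ (take $w=0$) every constant $V\equiv c>0$ attains the supremum, since $\b_1(c,0)=0=\rho_1(0)$; yet $s+\tilde{\a}_1(c)=-\infty$ for $c\neq 1$ because $\tilde{\a}_1(V)=-\infty$ whenever $\E[V]\neq 1$, and the subgradient inequality for $\E[-cX]=-c\,\E[X]$ fails for every $c\neq1$ since $\partial g_1(0)=\cb{-\E[X]}$. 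The underlying confusion is between attainment of the supremum over $U$ in \eqref{dualrep2} and attainment of the supremum over $s'$ defining the conjugate of $\b_1(V,\cdot)$ at slope $1$; these are different statements. The hypothesis that actually yields the conclusion --- and the one that is in fact verified at the point where the lemma is invoked in the proof of Theorem~\ref{thm2} --- is $g_1(w)=\E[-Vw^{\mathsf{T}}X]+\tilde{\a}_1(V)$, under which your minorant argument closes in one line. Be aware that the paper's own proof buries the same difficulty in the unproved ``$\supseteq$'' direction of \eqref{GP2}, which is subject to the same scaling objection; your attempt at least has the merit of making the missing step visible, but as written it does not close the gap.
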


\begin{proof}
	%Since $\rho_j$ is continuous from below, $\mathscr{V}_j(w)\neq \emptyset$.
	Thanks to Assumption~\ref{rho1convex}, the function $g_1$ defined by \eqref{g1} is convex. Let us define the continuous linear operator $L\colon\R^n\to L^p$ by
	\[
	Lw^\prime\coloneqq X^{\mathsf{T}}w^\prime,\quad w^\prime\in\R^n.
	\]
	Then, its adjoint operator $L^\ast\colon L^q\to \R^n$ is given by
	\[
	L^\ast V= \E\sqb{VX},\quad V\in L^q.
	\]
	Since $g_1=\rho_1\circ L$ and $\rho_1$ is finite at $w^{\mathsf{T}}X$, by \citet[Theorem~2.8.3(iii)]{zalinescu}, we have 
	\begin{equation}\label{GP1}
		\cb{L^\ast V\mid V\in\partial\rho_1(w^{\mathsf{T}}X)}=\cb{\E\sqb{VX}\mid V\in\partial\rho_1(w^{\mathsf{T}}X)}\subseteq\partial g_1(w).
	\end{equation}
	
	On the other hand, the subdifferential of the convex lower semicontinuous function $\rho_1$ at a point $Y\in L^p$ is given by
	\begin{equation}\label{GP2}
		\partial\rho_1(Y)=\cb{-V\mid \b_1(V,\E\sqb{-VY})=\rho_1(Y),\ V\in L^q_+},
	\end{equation}
	that is, it is the set of maximizers in the dual representation \eqref{dualrep2}. Taking $Y=w^{\mathsf{T}}X$ in \eqref{GP2} and combining it with \eqref{GP1} yields the claim of the lemma. 
\end{proof}

%Recalling \eqref{tildealpha}, we note that $\tilde{\a}_1$ may fail to be concave and the set $\{V_1\in L^q_+\mid \tilde{\a}_1=+\infty\}$ may be nonempty in general even under Assumption \ref{assmpalpha}. The next assumption ensures that $(\D(r))$ is a concave maximization problem by avoiding these complications.
%
%\begin{assumption}\label{tildealphaconcave}
%	The set $\{V_1\in L^q_+ \mid \tilde{\a}_1(V_1)\in \R\}$ is convex, and the function $\tilde{\a}_1$ is concave and upper semicontinuous on this set.
%\end{assumption}

The next assumption is a constraint qualification for $(\D(r))$.

\begin{assumption}\label{quasirelintnew}
	There exist $V_1,V_2\in L^q_{++}$ such that $\tilde{\a}_1(V_1)\in\R$, $\a_2(V_2,r)\in\R$.
\end{assumption}

We are ready to prove the second main theorem of the paper, which establishes the optimality of a Lagrange multiplier associated to $(\D(r))$.

\begin{theorem}\label{thm2}
	Suppose that Assumption~\ref{rho1convex}, Assumption~\ref{additionalbeta}, Assumption~\ref{assmpalpha}, Assumption~\ref{quasirelintnew} hold, and there exists an optimal solution $(V_1^\ast,V_2^\ast, y^\ast)\in L^q_+\times L^q_+\times \R$ for $(\D(r))$. Then, there exists an optimal Lagrange multiplier $w^\ast\in\R^n$ associated to the inequality constraint of $(\mathscr{D}(r))$. Moreover, every $w^\ast\in\R^n$ that is the Lagrange multiplier of the equality constraint of $(\mathscr{D}(r))$ at optimality is an optimal solution for $(\P(r))$, and $(\P(r))$ and $(\D(r))$ have the same optimal value $p(r)$.
\end{theorem}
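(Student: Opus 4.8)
Throughout write $A\coloneqq\W\cap\{w\in\R^n\mid w^{\mathsf T}X\in\A_2^r\}$, a convex set. The plan is to realize $(\D(r))$ as the convex program it is --- a concave maximization over the convex set $L^q_+\times L^q_+\times\R$ subject to the single $n$-dimensional affine inequality $\E[V_1X]+\E[V_2X]\le y\1$ --- and to show that its Lagrangian dual is exactly $(\P(r))$. Attaching a multiplier $w\in\R^n_+$ to this inequality and keeping $V_1,V_2\in L^q_+$, $y\in\R$ as an abstract domain, the Lagrangian is
$\mathcal L(V_1,V_2,y;w)=\bigl(\tilde\a_1(V_1)-\E[V_1X]^{\mathsf T}w\bigr)-\bigl(\a_2(V_2,r)+\E[V_2X]^{\mathsf T}w\bigr)+y\,(\1^{\mathsf T}w-1)$.
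Since $y$ is free, the dual function $d(w)=\sup_{V_1,V_2\in L^q_+,\,y\in\R}\mathcal L(V_1,V_2,y;w)$ equals $+\infty$ unless $\1^{\mathsf T}w=1$, i.e.\ unless $w\in\W$; this is the mechanism by which a multiplier ``for the inequality constraint'' is forced to satisfy the ``equality constraint'' $\1^{\mathsf T}w=1$. On $\W$ it splits as the sum of a supremum over $V_1$ and one over $V_2$.

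The key step is the evaluation of these two suprema. For $V_2$: the map $V\mapsto\a_2(V,r)$ is the support function of the closed convex set $\A_2^r$, and monotonicity of $\rho_2$ makes $L^p_+$ a recession cone of $\A_2^r$, so $\a_2(U,r)=+\infty$ whenever $U\notin-L^q_+$; hence restricting the conjugate of $\a_2(\cdot,r)$ to $V\in L^q_+$ changes nothing, and Fenchel--Moreau \cite[Theorem~2.3.3]{zalinescu} gives $\sup_{V\in L^q_+}\bigl(\E[-Vw^{\mathsf T}X]-\a_2(V,r)\bigr)=I_{\A_2^r}(w^{\mathsf T}X)$. For $V_1$ --- the heart of the argument --- one must show $\sup_{V\in L^q_+}\bigl(\E[-Vw^{\mathsf T}X]+\tilde\a_1(V)\bigr)=\rho_1(w^{\mathsf T}X)=g_1(w)$. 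The bridge is the identity $\tilde\a_1(V)=-\rho_1^{\ast}(-V)$ relating the quantity \eqref{tildealpha} to the ordinary Fenchel conjugate $\rho_1^{\ast}$ of $\rho_1$; it follows from the pseudo-inverse formula $\b_1(V,s)=\inf\{t\mid s\le\a_1(V,t)\}$ of Proposition~\ref{dualrep} together with the elementary identities $\rho_1^{\ast}(-V)=\sup_t\bigl(\a_1(V,t)-t\bigr)=\sup_t\bigl(\a_1^{-}(V,t)-t\bigr)$, with Assumption~\ref{rho1convex} and Assumption~\ref{assmpalpha} ensuring $\a_1^{-}(V,\cdot)$ is proper, concave and increasing so that these manipulations are licit. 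Since $\rho_1$ is monotone, $\dom\rho_1^{\ast}\subseteq-L^q_+$, whence $\sup_{V\in L^q_+}\bigl(\E[-Vw^{\mathsf T}X]+\tilde\a_1(V)\bigr)=\rho_1^{\ast\ast}(w^{\mathsf T}X)=\rho_1(w^{\mathsf T}X)$ by Assumption~\ref{rho1convex} and lower semicontinuity. Consequently $d(w)=g_1(w)+I_{\A_2^r}(w^{\mathsf T}X)$ for $w\in\W$, so the Lagrangian dual of $(\D(r))$ is precisely $\inf\{g_1(w)\mid w\in A\}=p(r)$.

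Next I would invoke strong duality. Slater's condition for $(\D(r))$ holds trivially: by Assumption~\ref{quasirelintnew} there are $V_1,V_2\in L^q_{++}$ with $\tilde\a_1(V_1),\a_2(V_2,r)\in\R$, and any $(V_1,V_2,y)$ with $y$ larger than every component of $\E[V_1X]+\E[V_2X]$ is then feasible, strictly satisfies the inequality, and has finite objective; combined with the hypothesis that $(\D(r))$ has an optimal solution (which makes its value not $+\infty$) this pins the value of $(\D(r))$ to a real number. Standard Lagrangian duality for a concave maximization over a convex set subject to finitely many affine inequalities satisfying Slater then yields both that the value of $(\D(r))$ equals the value of its Lagrangian dual --- hence equals $p(r)$, re-deriving Theorem~\ref{thm1} without Assumption~\ref{finiteness} as promised in Remark~\ref{finrem} --- and that the Lagrangian dual is attained by some $w^{\ast}\in\R^n_+$; since $d\equiv+\infty$ off $\W$, necessarily $w^{\ast}\in\W$. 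This is the asserted optimal Lagrange multiplier.

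Finally, to see that any such $w^{\ast}$ solves $(\P(r))$, read off the complementary-slackness and inner-optimality conditions at $(V_1^{\ast},V_2^{\ast},y^{\ast},w^{\ast})$: the triple maximizes $\mathcal L(\cdot\,;w^{\ast})$, which separates over $V_1,V_2,y$, and $w^{\ast\mathsf T}\bigl(\E[V_1^{\ast}X]+\E[V_2^{\ast}X]-y^{\ast}\1\bigr)=0$. From the (finite) $V_2$-part, $I_{\A_2^r}(w^{\ast\mathsf T}X)<\infty$, so $w^{\ast\mathsf T}X\in\A_2^r$, i.e.\ $w^{\ast}\in A$ is feasible for $(\P(r))$, and $-V_2^{\ast}\in\N(\A_2^r,w^{\ast\mathsf T}X)$ with $\E[-V_2^{\ast}w^{\ast\mathsf T}X]=\a_2(V_2^{\ast},r)$, by \eqref{conjsub}, \eqref{normalch} and $\partial I_{\A_2^r}=\N(\A_2^r,\cdot)$. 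From the $V_1$-part together with $\tilde\a_1(V_1^{\ast})=-\rho_1^{\ast}(-V_1^{\ast})$, the maximizer $V_1^{\ast}$ satisfies $-V_1^{\ast}\in\partial\rho_1(w^{\ast\mathsf T}X)$, i.e.\ $\b_1\bigl(V_1^{\ast},\E[-V_1^{\ast}w^{\ast\mathsf T}X]\bigr)=g_1(w^{\ast})<+\infty$, so $\E[-V_1^{\ast}X]\in\partial g_1(w^{\ast})$ by Lemma~\ref{GPcalc}. Now, for any competing $w\in A$, dotting the inequality constraint of $(\D(r))$ with $w\ge0$ and using $\1^{\mathsf T}w=1$, complementary slackness, and $-V_2^{\ast}\in\N(\A_2^r,w^{\ast\mathsf T}X)$ gives $\E[V_1^{\ast}X]^{\mathsf T}w\le y^{\ast}+\E[-V_2^{\ast}X]^{\mathsf T}w\le y^{\ast}+\E[-V_2^{\ast}X]^{\mathsf T}w^{\ast}=\E[V_1^{\ast}X]^{\mathsf T}w^{\ast}$; hence $\E[V_1^{\ast}X]\in\N(A,w^{\ast})$, so $-\E[V_1^{\ast}X]\in\partial g_1(w^{\ast})\cap\bigl(-\N(A,w^{\ast})\bigr)\neq\emptyset$. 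Thus the sufficient condition \eqref{PRthm} of the Pshenichnyi--Rockafellar theorem \cite[Theorem~2.9.1]{zalinescu} holds, so $w^{\ast}$ minimizes $g_1$ over $A$, i.e.\ $w^{\ast}$ solves $(\P(r))$ and $p(r)=g_1(w^{\ast})$ equals the value of $(\D(r))$. The main obstacle is the identity $\tilde\a_1=-\rho_1^{\ast}(-\,\cdot\,)$ and, more broadly, the careful translation between the quasiconvex-analytic objects $\a_1,\a_1^{-},\b_1$ and ordinary Fenchel conjugates, with attention to $\pm\infty$ values and to where Assumptions~\ref{rho1convex}, \ref{additionalbeta} and \ref{assmpalpha} actually enter; a secondary delicate point is keeping the constraint qualification confined to Slater for the finite-dimensional affine inequality, so that no interiority hypothesis on $L^q_+$ is needed --- which is exactly why Assumption~\ref{quasirelintnew} only demands finiteness of $\tilde\a_1$ and $\a_2(\cdot,r)$ at points of $L^q_{++}$.
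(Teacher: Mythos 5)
Your proof is correct and its skeleton coincides with the paper's: Slater's condition from Assumption~\ref{quasirelintnew}, Lagrangian duality for $(\D(r))$ with multiplier $w\in\R^n_+$ (the paper cites \citet[Corollary~4.8]{borwein} for the partially finite setting you describe informally), separation of the dual function into the $V_1$-, $V_2$- and $y$-terms identified with $g_1(w)$, $I_{\A_2^r}(w^{\mathsf{T}}X)$ and $I_\W(w)$, extraction of the first-order and complementary-slackness conditions, and the Pshenichnyi--Rockafellar criterion \eqref{PRthm}. You differ in two sub-steps, both to good effect. First, you evaluate $\sup_{V_1\in L^q_+}\of{\tilde{\a}_1(V_1)+\E[-V_1w^{\mathsf{T}}X]}=\rho_1(w^{\mathsf{T}}X)$ via the identity $\tilde{\a}_1(V)=-\rho_1^\ast(-V)$ together with $\dom\rho_1^\ast\subseteq -L^q_+$ and Fenchel--Moreau biconjugation; the paper instead re-runs the Lagrangian computation of Proposition~\ref{prop2} on $\inf\cb{t\mid \a_1^-(V_1,t)\geq\E[-V_1w^{\mathsf{T}}X]}$. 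Your route makes transparent where convexity, lower semicontinuity and properness of $\rho_1$ enter, at the cost of having to verify the conjugate identity (which you do correctly, including the $\a_1$ versus $\a_1^-$ point). Second, you obtain $\E[V_1^\ast X]\in\N(\cb{g_2\leq r}\cap\W,w^\ast)$ by directly pairing the dual inequality constraint with a feasible $w$, using complementary slackness and $-V_2^\ast\in\N(\A_2^r,(w^\ast)^{\mathsf{T}}X)$; the paper routes this through the sum of normal cones and a support-function comparison. Your version is a mild streamlining that avoids invoking the subdifferential sum rule. Neither change affects the substance, and all the delicate points you flag (properness, the sandwich $s+\tilde{\a}_1(V)\leq\b_1(V,s)$ at the maximizer feeding into Lemma~\ref{GPcalc}, and confining the constraint qualification to the finite-dimensional affine inequality) are handled consistently with the paper.
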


\begin{proof}
	Let $(V_1^\ast,V_2^\ast, y^\ast)\in L^q_+\times L^q_+\times \R$ be an optimal solution for $(\D(r))$. Let us denote by $d(r)$ the optimal value of $(\D(r))$. By Assumption~\ref{quasirelintnew}, Slater's condition holds, that is, there exist $V_1,V_2\in L^q_{++}$, $y\in\R$ such that $\tilde{\a}_1(V_1)\in\R$, $\a_2(V_2,r)\in\R$ and
	\[
	\max_{i\in\cb{1,\ldots,n}}\of{\E\sqb{V_1X_i}+\E\sqb{V_2X_i}}< y
	\]
	since we may simply take $y:=\max_{i\in\cb{1,\ldots,n}}\of{\E\sqb{V_1X_i}+\E\sqb{V_2X_i}}+1$. Hence, by \citet[Corollary~4.8]{borwein}, there is zero duality gap between $(\D(r))$ and its Lagrange dual problem, and we may write
	\begin{align}
		d(r)&=\inf_{w\in\R_+^n}\sup_{V_1,V_2\in L^q_+,y\in\R}\of{\tilde{\a}_1(V_1)-\a_2(V_2,r)-y-w^{\mathsf{T}}\of{\E\sqb{V_1X}+\E\sqb{V_2X}-y\1}}\notag\\
		&=\inf_{w\in\R_+^n}\sup_{V_1,V_2\in L^q_+,y\in\R}\of{\tilde{\a}_1(V_1)-\a_2(V_2,r)-y+\E[-V_1w^{\mathsf{T}}X]+\E[-V_2w^{\mathsf{T}}X]+y w^{\mathsf{T}}\1}.\label{nested}
	\end{align}
	Moreover, \citet[Corollary~4.8]{borwein} also ensures that there exists an optimal Lagrange multiplier $w^\ast\in\R^n$ so that
	\[
	d(r)=\sup_{V_1,V_2\in L^q_+,y\in\R}\of{\tilde{\a}_1(V_1)-\a_2(V_2,r)-y+\E[-V_1(w^\ast)^{\mathsf{T}}X]+\E[-V_2(w^\ast)^{\mathsf{T}}X]+y (w^\ast)^{\mathsf{T}}\1},
	\]
	and $(V_1^\ast,V_2^\ast, y^\ast)$ is an optimal solution of the above concave maximization problem.
	
	Let $w\in\R_+^n$. Note that the inner (maximization) problem in \eqref{nested} is easily separated into three terms as
	\begin{equation}\label{three}
		\sup_{V_1\in L^q_+}\of{\tilde{\a}_1(V_1)+\E[-V_1w^{\mathsf{T}}X]}+\sup_{V_2\in L^q_+}\of{-\a_2(V_2,r)+\E[-V_2w^{\mathsf{T}}X]} + \sup_{y\in\R}y(1- w^{\mathsf{T}}\1).
	\end{equation}
	From the last term in \eqref{three}, it follows immediately that 
	\[
	\sup_{y\in\R}y(1- w^{\mathsf{T}}\1) = I_{\W}(w).
	\]
	In particular, if $w=w^\ast$, then we must have $w^\ast\in\W$. It is also easy to check that
	\begin{equation}\label{condy}
		y^\ast \1 \in \N(\W,w^\ast).
	\end{equation}
	
	For the first term in \eqref{three}, note that 
	\begin{align*}
		\rho_1(w^{\mathsf{T}}X)&=\sup_{V_1\in L^q_+}\b_1(V_1,\E[-V_1w^{\mathsf{T}}X])\\
		&=\sup_{V_1\in L^q_+}\inf\cb{t\in\R\mid \a^-_1(V_1,t)\geq \E[-V_1w^{\mathsf{T}}X]}\\
		&=\sup_{V_1\in L^q_+,x\geq 0}\inf_{\substack{t\in\R: \\ \a_1^-(V_1,t)\in\R}}\of{t+x\E[-V_1w^{\mathsf{T}}X]-x\a^-_1(V_1,t)}\\
		&=\sup_{V_1\in L^q_+,x\geq 0}\inf_{\substack{t\in\R: \\ \a_1^-(V_1,t)\in\R}}\of{t+\E[-xV_1w^{\mathsf{T}}X]-\a^-_1(xV_1,t)}\\
		&=\sup_{V_1\in L^q_+}\inf_{\substack{t\in\R: \\ \a_1^-(V_1,t)\in\R}}\of{t+\E[-V_1w^{\mathsf{T}}X]-\a^-_1(V_1,t)}\\
		&=\sup_{V_1\in L^q_+}\of{\tilde{\a}_1(V_1)+\E[-V_1w^{\mathsf{T}}X]}.
	\end{align*}
	The steps of this calculation are justified by following the same arguments as in the proof of Proposition \ref{prop2}, hence we omit this justification for brevity.
	%In this calculation, the first equality is by \eqref{dualrep2}, the second is by the definition of $\b_1$, the third is by strong duality for convex optimization since the following Slater's condition is satisfied thanks to Assumption \ref{assmpalpha}: there exists $t\in\R$ such that $\a_1(V_1,t)>\E[-V_1(w^\ast)^{\mathsf{T}}X]$. The rest of the calculation is straightforward.
	In particular, when $w=w^\ast$, by the optimality property of $V_1^\ast$ and property (c) of the definition of maximal risk function, we have
	\[
	g_1(w^\ast)=\rho_1((w^\ast)^{\mathsf{T}}X)=\E[-V^\ast_1(w^\ast)^{\mathsf{T}}X]+\tilde{\a}_1(V^\ast_1)=\b_1(V_1^\ast, \E[-V_1^\ast(w^\ast)^{\mathsf{T}}X]).
	\]
	By Lemma~\ref{GPcalc}, it follows that
	\begin{equation}\label{condy1}
		\E[-V_1^\ast X]\in \partial g_1(w^\ast).
	\end{equation}
	
	For the second term in \eqref{three}, we first note that $V_2\mapsto \a_2(V_2,r)$ is closely related to the support function of the closed convex set $\A_2^r$; indeed, we have
	\[
	\a_2(V_2,r)=\sup_{Y\in\A_2^r}\E[-V_2Y]=\sigma_{\A_2^r}(-V_2),\quad V_2\in L^q_+.
	\]
	Hence, by the conjugate duality between indicator function and support function, we have
	\[
	\sup_{V_2\in L^q_+}\of{-\a_2(V_2,r)+\E[-V_2w^{\mathsf{T}}X]} = I_{\A_2^r}(w^{\mathsf{T}}X)=I_{\{g_2\leq r\}}(w).
	\]
	In particular, when $w=w^\ast$, by the first-order condition, we have
	\begin{equation}\label{conda}
		-(w^\ast)^{\mathsf{T}}X\in\partial \a_2(V^\ast_2,r),
	\end{equation}
	where the subdifferential is with respect to the first variable. 
	Hence, by \eqref{conjsub}, \eqref{conda} is equivalent to
	\[
	-(w^\ast)^{\mathsf{T}}X\in -\partial\sigma_{\A_2^r}(-V^\ast_2)
	\]
	as well as to
	\[
	-V^\ast_2\in \partial I_{\A_2^{r}}((w^\ast)^{\mathsf{T}}X).
	\]
	In particular, $\partial I_{\A_2^{r}}((w^\ast)^{\mathsf{T}}X)\neq \emptyset$ so that $(w^\ast)^{\mathsf{T}}X\in\A_2^r$, that is, $g_2(w^\ast)\leq r$, and
	\[
	-V^\ast_2\in \partial I_{\A_2^{r}}((w^\ast)^{\mathsf{T}}X)=\N(\A_2^r,(w^\ast)^{\mathsf{T}}X).
	\]
	So
	\[
	\sup_{w\in\R^n\colon g_2(w)\leq r}w^{\mathsf{T}}\E[-V_2^\ast X]=\sup_{w\in\R^n \colon g_2(w)\leq r}\E[-V^\ast_2 w^{\mathsf{T}}X]\leq \sup_{Y\in \A_2^r}\E[-V^\ast_2 Y]=\E[-V^\ast_2(w^\ast)^{\mathsf{T}}X],
	\]
	which implies that the inequality in the middle is indeed an equality. Therefore, the equality of the first and last quantities yields
	\begin{equation}\label{condv2}
		\E[-V_2^\ast X]\in \N(\{g_2\leq r\},w^\ast),
	\end{equation}
	where $\{g_2\leq r\}\coloneqq\cb{w\in\R^n\mid g_2(w)\leq r}$.
	
	Combining the results for the three terms of \eqref{three} for a generic $w\in\R^n_+$, we see that
	\[
	d(r)=\inf_{w\in\R^n_+}\of{\rho_1(w^{\mathsf{T}}X)+I_{\A_2^r}(w^{\mathsf{T}}X)+I_{\W}(w)}=\inf\cb{\rho_1(w^{\mathsf{T}}X)\mid \rho_2(w^{\mathsf{T}}X)\leq r, \ w\in\W}=p(r),
	\]
	establishing the strong duality between $(\P(r))$ and $(\D(r))$.
	
	Note that the inequality constraint in $(\D(r))$ ensures that
	\begin{equation}\label{normal1}
		w^{\mathsf{T}}\E[V^\ast_1 X]\leq w^{\mathsf{T}}(\E[-V_2^\ast X] + y^\ast\1),\quad w\in \W.
	\end{equation}
	Moreover, the complementary slackness condition for this constraint yields
	\begin{equation}\label{normal2}
		(w^\ast)^{\mathsf{T}}\E[V^\ast_1 X]= (w^\ast)^{\mathsf{T}}(\E[-V_2^\ast X] + y^\ast\1).
	\end{equation}
	On the other hand, bringing together \eqref{condy} and \eqref{condv2} gives
	\begin{align*}
		\E[-V_2^\ast X]  + y^\ast\1\in \N(\{g_2\leq r\},w^\ast)+\N(\W,w^\ast)&=\partial I_{\cb{g_2\leq r}}(w^\ast)+\partial I_\W(w^\ast)\\
		&\subseteq \partial (I_{\cb{g_2\leq r}}+ I_\W)(w^\ast)\\
		&=\partial I_{\cb{g_2\leq r}\cap \W}(w^\ast)\\
		&=\N(\cb{g_2\leq r}\cap \W,w^\ast).
	\end{align*}
	In the above calculation, only the passage to the third line is nontrivial and it is justified by the rules of subdifferential calculus; see, for instance, \citet[Theorem~23.8]{rockafellar}. Hence, by \eqref{normal1} and \eqref{normal2}, we have
	\[
	\sigma_{\cb{g_2\leq r}\cap \W}(\E[V^\ast_1 X])\leq \sigma_{\cb{g_2\leq r}\cap \W}(\E[-V_2^\ast X]  + y^\ast\1)=(w^\ast)^{\mathsf{T}}\of{\E[-V_2^\ast X]  + y^\ast\1}=(w^\ast)^{\mathsf{T}}\E[V^\ast_1 X]
	\]
	Hence, by \eqref{normalch}, we conclude that $\E[V^\ast_1 X]\in \N(\cb{g_2\leq r}\cap \W,w^\ast)$, that is, 
	\[
	\E[-V^\ast_1 X]\in -\N(\cb{g_2\leq r}\cap \W,w^\ast).
	\]
	Combining this with \eqref{condy1}, we obtain
	\[
	\E[-V^\ast_1 X]\in \partial g_1(w^\ast)\cap -\N(\cb{g_2\leq r}\cap \W,w^\ast).
	\]
	By \eqref{PRthm}, this implies that $w^\ast$ is a minimizer of $g_1$ over $\cb{g_2\leq r}\cap \W$, that is, $w^\ast$ is an optimal solution of $(\P(r))$. 
\end{proof}

\begin{remark}\label{finrem2}
	It should be noted that we do not work under Assumption~\ref{finiteness} in Theorem~\ref{thm2}, hence the strong duality established by Theorem~\ref{thm1} is not taken for granted; instead we re-establish strong duality in Theorem~\ref{thm2}. Then, the reader might naturally question the need for Theorem~\ref{thm1}. As noted in Remark~\ref{finrem}, in the absence of Assumption~\ref{finiteness}, the arguments in the proof of Proposition~\ref{prop1} (and Theorem~\ref{thm1}) can be seen as a heuristic way to derive the dual problem $(\D(r))$. Theorem~\ref{thm2} provides the formal justification of this heuristic approach without using Sion's minimax theorem.
\end{remark}

\section{Analysis of the problem when the principle risk measure is quasiconvex}\label{quasiconvexcase}

In this section, we remove Assumption~\ref{rho1convex} and study $(\P(r))$ with $\rho_1$ being a quasiconvex risk measure.

Recalling \eqref{recovery}, we may write
\begin{align}
	p(r)&=\inf\cb{\rho_1(w^{\mathsf{T}}X)\mid \rho_2(w^{\mathsf{T}}X)\leq r,\ w\in\W}\notag \\
	&=\inf\cb{t\in\R\mid w^{\mathsf{T}}X\in \A_1^t,\ w^{\mathsf{T}}X\in \A_2^r,\ w\in\W}\notag \\
	&=\inf_{t\in\R}\of{t+f(t,r)}=\inf_{t\in\R: f(t,r)<+\infty}\of{t+f(t,r)},\label{qc-c}
\end{align}
where
\begin{equation}\label{feas}
	f(t,r)\coloneqq \inf_{w\in\W}\of{I_{\A_1^t}(w^{\mathsf{T}}X)+I_{\A_2^r}(w^{\mathsf{T}}X)},\quad t\in\R.
\end{equation}
Note that, for each $t\in\R$, $f(t,r)$ is closely related to the \emph{feasibility problem}
\begin{align*}
	\text{Find }w\in\W\text{ such that }w^{\mathsf{T}}X\in \A_1^t\cap \A_2^r.\tag{$\mathscr{F}^P(t,r)$}
\end{align*}
Indeed, if there exists $w\in\R^n$ solving $(\mathscr{F}^P(t,r))$, then $f(t,r)=0$; otherwise, $f(t,r)=+\infty$. The expression in \eqref{feas} formulates $(\mathscr{F}^P(t,r))$ as an optimization problem whose optimal value is $f(t,r)$; this problem is convex because $\A_1^t,\A_2^r$ are convex sets. Hence, in view of \eqref{qc-c}, the quasiconvex portfolio optimization problem $(\P(r))$ is characterized by a family of convex optimization problems; this is a well-known paradigm in quasiconvex programming as discussed recently in \citet[Section~2.1]{boydnew}.

We introduce an analogue of Assumption~\ref{quasirelintnew} that will be needed in recovering a solution for $(\mathscr{F}^P(t,r))$ below.

\begin{assumption}\label{quasirelintnew2}
	Given $t\in\R$, there exist $V_1,V_2\in L^q_{++}$ such that $\a_1(V_1,t)\in\R$, $\a_2(V_2,r)\in\R$.
\end{assumption}

In the next theorem, following a similar path as in Section~\ref{convexcase} (see Theorem~\ref{thm1}, Theorem~\ref{thm2}), we provide a dual formulation of $f(t,r)$ and a method to calculate a solution for $(\mathscr{F}^P(t,r))$.

\begin{theorem}\label{thm3}
	Let $t\in\R$ and consider the problem
	\begin{align*}
		%		&\text{maximize}\; \;               t-\a_1(V_1,t)-\a_2(V_2,r)-y\tag{$\D(r)$}\\
		&\text{maximize}\; \;               -\a_1(V_1,t)-\a_2(V_2,r)-y\tag{$\mathscr{F}^D(t,r)$}\\
		&\text{subject to}\;\;                   \E[V_1X]+\E[V_2X]\leq y\1\\
		%	& \quad\quad\quad\quad\;\;\;  \tilde{\a}_1(V_1)\in\R \\
		& \quad\quad\quad\quad\;\;\;      V_1, V_2\in L^q_+,\; y\in\R,
	\end{align*}
	\begin{enumerate}[(i)]
		\item Then, $(\mathscr{F}^P(t,r))$ and $(\mathscr{F}^D(t,r))$ have the same optimal value $f(t,r)$.
		\item Suppose that Assumption~\ref{quasirelintnew2} holds for $t$ and there exists an optimal solution $(V^t_1,V^t_2,y^t)$ for $(\mathscr{F}^D(t,r))$. Then, there exists an optimal Lagrange multiplier $w^t\in\R^n$ associated to the inequality constraint of $(\mathscr{F}^D(t,r))$. Moreover, every $w^t\in\R^n$ that is the Lagrange multiplier of the inequality constraint of $(\mathscr{F}^D(t,r))$ at optimality is an optimal solution for $(\mathscr{F}^P(t,r))$.
	\end{enumerate}
\end{theorem}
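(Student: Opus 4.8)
The strategy is to mirror the proofs of Theorem~\ref{thm1} and Theorem~\ref{thm2}, but with $\rho_1$ replaced by the indicator function $I_{\A_1^t}$ (i.e., the ``convex risk measure'' whose only acceptance set of interest is $\A_1^t$). The key observation is that $I_{\A_1^t}(\cdot)=\sup_{V_1\in L^q_+}\of{\E[-V_1\,\cdot\,]-\a_1(V_1,t)}=\sup_{V_1\in L^q_+}\of{\E[-V_1\,\cdot\,]-\sigma_{\A_1^t}(-V_1)}$ by conjugate duality between the indicator and support functions of the closed convex set $\A_1^t$ (recall $\A_1^t$ is closed and convex since $\rho_1$ is lower semicontinuous quasiconvex, and $\a_1(V_1,t)=\sigma_{\A_1^t}(-V_1)$). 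This plays the role that the term $\sup_{V_1}\b_1(V_1,\E[-V_1 w^\mathsf{T}X])=\rho_1(w^\mathsf{T}X)$ played in Section~\ref{convexcase}; here the analogue of $\tilde\a_1(V_1)$ is simply $-\a_1(V_1,t)$.

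For part (i): starting from \eqref{feas}, write $f(t,r)=\inf_{w\in\W}\of{I_{\A_1^t}(w^\mathsf{T}X)+I_{\A_2^r}(w^\mathsf{T}X)}$, insert the two conjugate-duality representations above for $I_{\A_1^t}$ and $I_{\A_2^r}$, add the indicator $I_\W(w)=\sup_{y\in\R}y(1-\1^\mathsf{T}w)$, and collect terms to obtain $f(t,r)=\inf_{w\in\R^n_+}\sup_{V_1,V_2\in L^q_+,\,y\in\R}\of{-\a_1(V_1,t)-\a_2(V_2,r)-y+\E[-V_1w^\mathsf{T}X]+\E[-V_2w^\mathsf{T}X]+y\1^\mathsf{T}w}$. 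Since $f(t,r)\in\{0,+\infty\}$, the weak-duality inequality $f(t,r)\geq\sup_{(\mathscr F^D)}$ is immediate and suffices when $f(t,r)=0$; when $f(t,r)=+\infty$ one checks directly that the dual is unbounded (as in Proposition~\ref{prop1}, one can invoke a minimax argument, but because only weak duality plus the two-valued structure is needed, a cleaner route is: if $(\mathscr F^P(t,r))$ is infeasible then $\A_1^t\cap\A_2^r\cap\{w^\mathsf{T}X:w\in\W\}=\emptyset$, and a separation argument produces a dual sequence driving the objective to $+\infty$). Alternatively, since the inner problem is exactly the Lagrangian of $(\mathscr F^D(t,r))$ with $w$ as multiplier, one gets $f(t,r)=d^{\mathscr F}(t,r)$ directly from the Lagrangian-dual identity once the constraint qualification below is in force; but part (i) is claimed without Assumption~\ref{quasirelintnew2}, so I would present the two-valued weak-duality-plus-infeasibility argument to avoid needing Slater.

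For part (ii): assume Assumption~\ref{quasirelintnew2}, which gives $V_1,V_2\in L^q_{++}$ with $\a_1(V_1,t),\a_2(V_2,r)$ finite, hence (taking $y$ one unit above the max componentwise value) Slater's condition for $(\mathscr F^D(t,r))$. Apply \citet[Corollary~4.8]{borwein} to get zero duality gap between $(\mathscr F^D(t,r))$ and its Lagrange dual and the existence of an optimal Lagrange multiplier $w^t\in\R^n$; the term $\sup_{y\in\R}y(1-\1^\mathsf{T}w^t)$ being finite forces $w^t\in\W$. Then, exactly as in the three-term decomposition \eqref{three} of the proof of Theorem~\ref{thm2}, the first term $\sup_{V_1\in L^q_+}\of{-\a_1(V_1,t)+\E[-V_1(w^t)^\mathsf{T}X]}=I_{\A_1^t}((w^t)^\mathsf{T}X)=I_{\{g_1\le t\}}(w^t)$ forces $(w^t)^\mathsf{T}X\in\A_1^t$ and, by the first-order condition at the optimizer $V_1^t$, yields $\E[-V_1^tX]\in\N(\{g_1\le t\},w^t)$; the second term likewise gives $(w^t)^\mathsf{T}X\in\A_2^r$ and $\E[-V_2^tX]\in\N(\{g_2\le r\},w^t)$; and the complementary slackness for the inequality constraint gives $(w^t)^\mathsf{T}(\E[V_1^tX]+\E[V_2^tX])=y^t\1^\mathsf{T}w^t=y^t$ together with $y^t\1\in\N(\W,w^t)$. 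Summing the three normal-cone memberships and using the subdifferential-of-sum rule \citet[Theorem~23.8]{rockafellar}, $\E[-V_1^tX]\in\N(\{g_1\le t\},w^t)$ with the feasibility $(w^t)^\mathsf{T}X\in\A_1^t\cap\A_2^r$, $w^t\in\W$ shows $w^t$ both solves $(\mathscr F^P(t,r))$ and certifies $f(t,r)=0$; since $f(t,r)\in\{0,+\infty\}$, $w^t$ is optimal for $(\mathscr F^P(t,r))$.

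\textbf{Main obstacle.} The delicate point is part (i) without a constraint qualification: establishing $f(t,r)=+\infty\Rightarrow$ dual value $=+\infty$. Unlike Theorem~\ref{thm1}, one cannot simply invoke Sion's minimax theorem because $-\a_1(V_1,t)$ and $-\a_2(V_2,r)$ may take the value $-\infty$ and the Lagrangian need not be real-valued. The cleanest fix is to note that infeasibility of $(\mathscr F^P(t,r))$ means the convex compact set $\{w^\mathsf{T}X:w\in\W\}\subseteq L^p$ is disjoint from the closed convex set $\A_1^t\cap\A_2^r$, and to apply a strict-separation theorem (Hahn--Banach) in $L^p$—though some care is needed about compactness/closedness in the weak$^\ast$ case $p=+\infty$, so I would restrict attention to separating a point $w_0^\mathsf{T}X$ from $\A_1^t\cap\A_2^r$ for a cleverly chosen $w_0$, or equivalently argue in finite dimensions via the images of $\W$ under the finitely many continuous functionals involved. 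The separating functional, suitably scaled and placed into the roles of $V_1$ (or $V_2$) with $y$ chosen to keep the linear constraint slack, drives the dual objective to $+\infty$.
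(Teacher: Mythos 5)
Your part (ii) follows the paper's proof essentially verbatim: Slater's condition from Assumption~\ref{quasirelintnew2}, strong Lagrangian duality for $(\mathscr F^D(t,r))$ yielding the multiplier $w^t$, the three-term decomposition of the inner supremum into $I_{\{g_1\le t\}}(w)+I_{\{g_2\le r\}}(w)+I_\W(w)$, and the normal-cone/subdifferential-sum argument to certify $w^t\in\W$ with $(w^t)^{\mathsf T}X\in\A_1^t\cap\A_2^r$. That part is correct and matches the paper. Your part (i), however, departs from the paper and its infeasible branch has a genuine gap. The paper proves (i) in two stages: first under the extra hypothesis that $\a_1(\cdot,t)$ and $\a_2(\cdot,r)$ are real-valued, by applying Sion's minimax theorem to $v_{t,r}$ and then finite-dimensional LP duality for the inner minimization over $\W$; then, for the general statement, it reuses the Lagrangian machinery of part (ii) (so the paper's general case of (i) is in fact established under the hypotheses of (ii), not unconditionally). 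You correctly diagnose why Sion cannot be invoked when the penalties take the value $-\infty$, and your weak-duality argument for the case $f(t,r)=0$ works (modulo the small omission that you must also exhibit a dual-feasible point with value $\ge 0$, e.g.\ $(V_1,V_2,y)=(0,0,0)$, using $\a_j(0,\cdot)=0$ when the acceptance sets are nonempty).

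The gap is in the case $f(t,r)=+\infty$. Strictly separating the compact convex set $\cb{w^{\mathsf T}X\mid w\in\W}$ from $\A_1^t\cap\A_2^r$ produces a single functional $U$ (which monotonicity forces into $L^q_+$), but the dual problem requires a \emph{pair} $(V_1,V_2)\in L^q_+\times L^q_+$ whose individual penalties $\a_1(V_1,t)=\sigma_{\A_1^t}(-V_1)$ and $\a_2(V_2,r)=\sigma_{\A_2^r}(-V_2)$ are both controlled. To convert $U$ into such a pair you must split $U=V_1+V_2$ with $\sigma_{\A_1^t}(-V_1)+\sigma_{\A_2^r}(-V_2)\le\sigma_{\A_1^t\cap\A_2^r}(-U)$, i.e.\ you need exactness of the infimal convolution $\sigma_{\A_1^t}\,\square\,\sigma_{\A_2^r}=\sigma_{\A_1^t\cap\A_2^r}$ at $-U$. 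That identity is itself a strong-duality statement requiring a constraint qualification on $\A_1^t$ and $\A_2^r$ (of the same nature as Assumption~\ref{quasirelintnew2}), so your route does not actually deliver part (i) free of hypotheses; scaling $U$ without the splitting only shows that the \emph{intersection-based} dual is unbounded, not $(\mathscr F^D(t,r))$. If you want to present part (i) in two regimes as the paper does, the cleaner path for the unconditional regime is the paper's: add the real-valuedness hypothesis, apply Sion to get $\sup_{V_1,V_2}\inf_{w\in\W}v_{t,r}$, and dualize the inner linear program over $\W$ to produce the variable $y$; then obtain the general case as a by-product of part (ii).
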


\begin{proof}
	We first prove (i) under the additional assumption that $\a_1(V_1,t)\in\R, \a_2(V_2,r)$ for all $V_1,V_2\in L^q_+$. Since $\A_1^t, \A_2^r$ are closed convex subsets of $L^p$, similar to the proof of Proposition~\ref{prop1}, we have
	\begin{align*}
		f(t,r)&= \inf_{w\in\W}\of{I_{\A_1^t}(w^{\mathsf{T}}X)+I_{\A_2^r}(w^{\mathsf{T}}X)}\\
		&=\inf_{w\in\W}\of{\sup_{V_1\in L^q_+}\of{\E[-V_1w^{\mathsf{T}}X]-\a_1(V_1,t)}+\sup_{V_2\in L^q_+}\of{\E[-V_2w^{\mathsf{T}}X]-\a_2(V_1,r)}}\\
		&=\inf_{w\in \W}\sup_{V_1\in L^q_+, V_2\in L^q_+}v_{t,r}(w,V_1,V_2),
	\end{align*}
	where
	\[
	v_{t,r}(w,V_1,V_2)\coloneqq \E[-V_1w^{\mathsf{T}}X]+\E[-V_2w^{\mathsf{T}}X]-\a_1(V_1,t)-\a_2(V_2,r).
	\]
	By the properties of support function, it is clear that $(V_1,V_2)\mapsto v_{t,r}(w,V_1,V_2)$ is concave and upper semicontinuous for fixed $w\in\W$. On the other hand, for fixed $(V_1,V_2)\in L^q_+\times L^q_+$, the function $w\mapsto v_{t,r}(w,V_1,V_2)$ is continuous and affine, hence lower semicontinuous convex. Since $\W$ is a convex compact set and $v_{t,r}$ has finite values thanks to our additional assumption, we may apply the standard minimax theorem \citet[Corollary~3.3]{sion} and get
	\begin{align*}
		f(t,r)&=\sup_{V_1\in L^q_+, V_2\in L^q_+}\inf_{w\in\W}v_{t,r}(w,V_1,V_2)\\
		&=\sup_{V_1\in L^q_+, V_2\in L^q_+}\of{\inf\cb{\of{\E[-V_1X]+\E[-V_2X]}^{\mathsf{T}}w\mid \1^{\mathsf{T}}w=1,\ w\in\R^n_+}-\a_1(V_1,t)-\a_2(V_2,r)}.
	\end{align*}
	Note that the inner minimization problem is a finite-dimensional linear optimization problem with nonempty feasible region. Hence, by linear programming duality, we may pass to its dual formulation, which yields
	\begin{align*}
		f(t,r)&=\sup_{V_1\in L^q_+, V_2\in L^q_+}\of{\sup\cb{-y\mid \E[V_1X]+\E[V_2X]\leq y\1,\ y\in\R}-\a_1(V_1,t)-\a_2(V_2,r)}\\
		&=\sup\cb{-\a_1(V_1,t)-\a_2(V_2,r)-y\mid \E[V_1X]+\E[V_2X]\leq y\1,\ V_1,V_2\in L^q_+,\ y\in\R},
	\end{align*}
	which coincides with the optimal value of $(\mathscr{F}^D(t,r))$.
	
	Next, we prove (i) without the additional assumption as well as (ii). Let $\tilde{f}(t,r)$ be the optimal value of $(\mathscr{F}^D(t,r))$. Let $(V_1^t,V_2^t,y^t)\in L^q_+\times L^q_+\times \R$ be an optimal solution for $(\mathscr{F}^D(t,r))$. By Assumption~\ref{quasirelintnew2}, there exist $V_1,V_2\in L^q_{++}$ such that $\a1(V_1,t)\in\R$ and $\a_2(V_2,r)\in\R$. Similar to the proof of Theorem~\ref{thm2}, it follows that Slater's condition holds for $(\mathscr{F}^D(t,r))$; hence, by strong duality for convex optimization, there exists an optimal Lagrange multiplier $w^t\in\R^n$ such that
	\begin{align*}
		&\tilde{f}(t,r)\\
		&=\inf_{w\in\R^n}\sup_{V_1,V_2\in L^q_+, y\in\R}\of{-\a_1(V_1,t)-\a_2(V_2,r)-y-w^{\mathsf{T}}\of{\E[V_1X]+\E[V_2X]-y\1}}\\
		&=\inf_{w\in\R^n}\sup_{V_1,V_2\in L^q_+, y\in\R}\of{-\a_1(V_1,t)-\a_2(V_2,r)-y-\E[V_1w^{\mathsf{T}}X]+\E[V_2w^{\mathsf{T}}X]-yw^{\mathsf{T}}\1}\\
		&=\sup_{V_1,V_2\in L^q_+, y\in\R}\of{-\a_1(V_1,t)-\a_2(V_2,r)-y-\E[V_1(w^t)^{\mathsf{T}}X]-\E[V_2(w^t)^{\mathsf{T}}X]-y(w^t)^{\mathsf{T}}\1}\\
		&=\sup_{V_1\in L^q_+}\of{-\a_1(V_1,t)-\E[V_1(w^t)^{\mathsf{T}}X]}+\sup_{V_2\in L^q_+}\of{-\a_2(V_2,r)-\E[V_2(w^t)^{\mathsf{T}}X]}+\sup_{t\in\R}y\of{1-(w^t)^{\mathsf{T}}\1}.
	\end{align*}
	Then, following similar arguments as in the proof of Theorem~\ref{thm2}, it can be checked that $\tilde{f}(t,r)=f(t,r)$ so that (i) holds without the additional assumption as well. Moreover, it can be checked that $w^t\in\W$, and
	\[
	y^t\1\in\N(\W,w^t),\quad -(w^t)^{\mathsf{T}}X\in \partial \a_1(V_1^t,t),\quad (w^t)^{\mathsf{T}}X\in \partial \a_2(V_2^t,r),
	\]
	which implies that
	\[
	\E[-V_1^t X]\in -\N(\{g_1\leq t \},w^t),\quad \E[-V_2^t X]\in -\N(\{g_2\leq r\},w^t);
	\]
	and finally we obtain
	\[
	\E[-V_1^t X]\in -\N(\{g_1\leq t\}\cap\{g_2\leq r\}\cap\W,w^t).
	\]
	Hence, we conclude that $w^t$ solves the feasibility problem $(\mathscr{F}^P(t,r))$. 
\end{proof}

The next assumption will be useful when devising a method to find an approximately optimal solution for $(\P(r))$.

\begin{assumption}\label{gen}
	It holds $g_1(\1)\in\R$ and $\dom g_1\cap \{g_2\leq r\}\cap \W\neq \emptyset$. In other words, $\rho_1(\1^{\mathsf{T}}X)\in\R$ and there exists $w^0\in\W$ such that $\rho_1((w^0)^{\mathsf{T}}X)\in\R$ and $\rho_2((w^0)^{\mathsf{T}}X)\leq r$.
\end{assumption}

Finally, we discuss a simple method to solve $(\P(r))$ with the help of Theorem~\ref{thm3}. First note that 
\[
p(r)=\inf\cb{g_1(w)\mid g_2(w)\leq r, w\in\W}.
\]
Since $\{g_2\leq r\}\cap \W$ is a convex set and $g_1$ is a lower semicontinuous function, $(\P(r))$ has an optimal solution, that is, there exists $w^\ast\in\W$ such that $g_2(w^\ast)\leq r$ and 
\[
g_1(w^\ast)=p(r).
\] 
Moreover, under Assumption~\ref{gen}, we also have
\[
p(r)=g_1(w^\ast)\leq g_1(w^0)<+\infty.
\]
On the other hand, by the monotonicity of $\rho_1$ and Assumption \ref{gen},
\[
-\infty<g_1(\1)=\rho_1(\1^{\mathsf{T}}X)\leq \rho_1((w^\ast)^{\mathsf{T}}X) = g_1(w^\ast)=p(r)
\]
Hence, $p(r)\in\R$ with finite upper bound $u_1\coloneqq g_1(w^0)$ and finite lower bound $\ell_1\coloneqq g_1(\1)$. Let $\varepsilon>0$. Using these bounds, the well-known bisection algorithm (see \citet[Section~3]{boydnew}) can be employed to find an approximately optimal solution for $(\P(r))$ as follows. At each iteration $k\in\mathbb{N}$, we start with $\ell_k,u_k\in\R$ such that $\ell_k\leq p(r)\leq u_k$ and we let 
\[
t_k\coloneqq \frac{\ell_k+u_k}{2}.
\]
Then, under Assumption~\ref{quasirelintnew}, we solve the feasibility problem $(\mathscr{F}^P(t_k,r))$, that is, we calculate $f(t_k,r)$. If $f(t_k,r)=0$, then we have $\ell_k\leq p(r)\leq t_k$, in which case we proceed to the next iteration using $\ell_{k+1}\coloneqq \ell_k$ and $u_{k+1}\coloneqq t_k$. Otherwise, if $f(t_k,r)=+\infty$, then we have $t_k\leq p(r)\leq u_k$, in which case we proceed to the next iteration using $\ell_{k+1}\coloneqq t_k$ and $u_{k+1}\coloneqq u_k$. We stop this procedure at the first iteration number $K$ for which $u_K-\ell_K\leq \varepsilon$. It can be checked that
\[
K\leq \left\lceil \log_2\of{\frac{g_1(w^0)-g_1(\1)}{\varepsilon}}\right \rceil
\]
so that the algorithm stops in finitely many iterations. Then, we may apply Theorem \ref{thm3} and find an optimal Lagrange multiplier $w^{t_{K}}$, which also solves $(\mathscr{F}^P(t_{K},r))$. Hence, $w^{t_{K}}\in\W$, $g_2(w^{t_{K}})\leq r$ and
\[
p(r)\leq g_1(w^{t_{K}})\leq t_{K}\leq p(r)+\varepsilon,
\]
which shows that $w^{t_{K}}$ is an $\varepsilon$-optimal solution for $(\P(r))$.

\section{Examples}\label{examples}

In this section, we consider some well-known classes of quasiconvex risk measures as special cases of $\rho_1$ and $\rho_2$.

\subsection{Convex risk measures}\label{crm}

Let $\rho$ be a lower semicontinuous convex risk measure on $L^p$ with $\rho(0)\in\R$ and acceptance sets $(\A^t)_{t\in\R}$. Since $\rho$ is translative, its acceptance set $\A\coloneqq\A^0$ at level $0$ determines $\rho$ completely. As a result, the dual representation in Proposition~\ref{dualrep} reduces to a simpler form which we derive here for the convenience of the reader. Let $V\in L^q_+\sm\cb{0}$ and $t\in\R$. Then, by the translativity of $\rho$,
\[
\a(V,t)=\sup_{Y\in L^p\colon \rho(Y)\leq t}\E[-VY]=\sup_{Y\in L^p\colon \rho(Y+t)\leq 0}\E[-VY]=\sup_{Y\in L^p\colon Y+t \in \A}\E[-VY]=\g(V)+t\E[V],
\]
where
\[
\g(V)\coloneqq  \sup_{Y\in\A}\E[-VY].
\]
The function $\g\colon L^q_+\sm\cb{0}\to \bar{\R}$ is called the minimal penalty function of $\rho$ \emph{in the sense of convex risk measures} (not to be confused with the minimal penalty function $\a$ \emph{in the sense of quasiconvex risk measures}); it follows from the definition that $\g$ is convex and lower semicontinuous. Since $V\neq 0$, we have $\E[V]>0$. From this and the above expression for $\a$, it is evident that $t\mapsto\a(V,t)$ is concave (indeed affine) and continuous with
\[
\lim_{t\rightarrow\infty}\a^-(V,t)=\lim_{t\rightarrow\infty}\a(V,t)=+\infty
\]
(see Assumption~\ref{assmpalpha}); and $t\mapsto \a(V,t)$ has a true inverse given by
\[
\b(V,s)=\frac{s-\g(V)}{\E[V]}, \quad s\in\R.
\]
Moreover, for each $a\in\R$, the set
\[
\cb{(V,s)\in L^{q,1}_+\times\R\mid \b(V,s)\geq a}=\cb{(V,s)\in L^{q,1}_+\times\R\mid \g(V)+a\E[V]-s\leq 0}
\]
is closed by the lower semicontinuity of $\g$; therefore, $\b$ is jointly upper semicontinuous on $L^{q,1}_+\times\R$ (see Assumption~\ref{additionalbeta}). On the other hand, we have
\[
\tilde{\a}(V)\coloneqq\inf_{t\in\R}\of{t-\a(V,t)}=\inf_{t\in\R}\of{(1-\E[V])t-\g(V)}=\begin{cases}
	-\g(V)&\text{if }\E[V]=1,\\
	-\infty&\text{else}.
\end{cases}
\]

Now suppose that, in $(\P(r))$, both $\rho_1$ and $\rho_2$ are convex risk measures with respective minimal penalty functions $\g_1$ and $\g_2$ in the sense of convex risk measures. Then, by the above discussion, Assumption~\ref{rho1convex}, Assumption~\ref{additionalbeta} and Assumption~\ref{assmpalpha} hold, and the dual problem $(\D(r))$ can be rewritten as
\begin{align*}
	&\text{maximize}\; \;               -\g_1(V_1)-\g_2(V_2)-r\E[V_2]-y\\
	&\text{subject to}\;\;                   \E[V_1X]+\E[V_2X]\leq y\1 \\
	& \quad\quad\quad\quad\;\;\;      \E[V_1]=1\\
	& \quad\quad\quad\quad\;\;\;      V_1, V_2\in L^q_+,\; y\in\R.
\end{align*}

\begin{example}\label{entropic}
	For this example, we assume that $p=+\infty$ (and $q=1$). For each $j\in\cb{1,2}$, let us suppose that $\rho_j$ is the \emph{entropic risk measure} with risk aversion parameter $r_j>0$ (see, for instance, \citet[Example~4.34]{fs:sf}), that is,
	\[
	\rho_j(Y)=\frac{1}{r_j}\log\E\sqb{e^{-r_j Y}},\quad Y\in L^\infty.
	\]
	In this case, it is well-known that $\g_j$ is the relative entropy function given by
	\[
	\g_j(V)=\frac{1}{r_j}\E\sqb{V\log\of{\frac{V}{\E[V]}}}=\frac{1}{r_j}\of{\E[V\log(V)]-\E[V]\log(\E[V])},\quad V\in L^1_+\sm\cb{0}.
	\]
	Note that Assumption~\ref{quasirelintnew} is satisfied here: by taking $V_1=V_2\equiv 1$, we have $\tilde{\a}_1(V_1)=0\in\R$ and $\a_2(V_2,r)=r\in\R$. 
	Moreover, $(\D(r))$ takes the form
	\begin{align*}
		&\text{maximize}\; \;               -\frac{1}{r_1}\E[V_1\log(V_1)]-\frac{1}{r_2}\E[V_2\log(V_2)]+\frac{1}{r_2}\E[V_2]\log(\E[V_2])-r\E[V_2]-y\\
		&\text{subject to}\;\;                   \E[V_1X]+\E[V_2X]\leq y\1\\
		& \quad\quad\quad\quad\;\;\;      \E[V_1]=1\\
		& \quad\quad\quad\quad\;\;\;      V_1, V_2\in L^1_+,\; y\in\R.
	\end{align*}
	In the case of a finite probability space, this problem can be solved numerically using, for instance, the convex optimization package CVX (see \cite{cvx,cvxpaper}) as it is able to work with the convex function $z\mapsto z\log(z)$ on $\R_+$. As is standard in convex optimization, these packages also provide the value of the Lagrange multiplier $w^\ast$ that corresponds to the inequality constraint in $(\D(r))$ at (approximate) optimality. By Theorem \ref{thm2}, such $w^\ast$ is an (approximately) optimal solution of $(\P(r))$.
\end{example}

\subsection{Certainty equivalents}\label{certaintyequivalent}

Certainty equivalents form an important class of quasiconvex risk measures. We briefly recall their definitions and properties; see \citet[Example~8]{drapeaukupper} for more details. To avoid integrability issues, we assume that $p=+\infty$ (hence $q=1$) in all examples although, in each example, a larger $L^p$ space can be considered depending on the nature of the loss function.

Let $\ell\colon\R\to(-\infty,+\infty]$ be a convex lower semicontinuous increasing function that is differentiable on $\dom \ell\coloneqq \cb{y\in\R\mid \ell(y)<+\infty}$, we call $\ell$ a \emph{loss function}. Let $\rho$ be the \emph{certainty equivalent} corresponding to $\ell$, that is,
\[
\rho(Y)=\ell^{-1}\of{\E[\ell(-Y)]},\quad Y\in L^\infty,
\]
where $\ell^{-1}$ is the left-continuous inverse of $\ell$. The minimal penalty function of $\rho$ is given by
\begin{equation}\label{aexp}
	\a(V,t)=\E\sqb{V h\of{\lambda(V,t)\frac{V}{\E[V]}}},\quad V\in L^1_+\sm\cb{0},\ t\in\R,
\end{equation}
where $h$ is the right-continuous inverse of the derivative $\ell^\prime$ of $\ell$, and $\lambda(V,t)>0$ is a multiplier that is determined by the equation
\begin{equation}\label{lambdaeqn}
	\E\sqb{\ell\of{h\of{\lambda(V,m)\frac{V}{\E[V]}}}}=\ell^{+}(t),
\end{equation}
where $\ell^+$ is the right-continuous version of $\ell$.

Next, we consider some special cases of $\ell$ for which more explicit forms of $\a$ can be obtained.

\begin{example}\label{quadratic}
	Suppose that $\ell$ is the \emph{quadratic loss function} given by
	\[
	\ell(y)=\begin{cases}\frac{1}{2}y^2+y&\text{if }y\geq -1, \\ -\frac12 &\text{else}.\end{cases}
	\] 
	Let $V\in L^1_+\sm\cb{0}$, $t,s\in\R$. After elementary calculations, we may solve \eqref{lambdaeqn} for $\lambda(V,t)$ and use the resulting expression in \eqref{aexp} to get
	\[
	\a(V,t)=\begin{cases}(1+t)\norm{V}_2-\E[V]&\text{if }t> -1,\\ -\E[V]&\text{else},\end{cases} \quad\quad  \b(V,s)=\begin{cases}\frac{s+\E[V]}{\norm{V}_2}-1&\text{if }s>-\E[V],\\ -\infty &\text{else}.\end{cases}
	\]
	In particular, as in Section~\ref{crm}, it is easy to verify that Assumption~\ref{additionalbeta}  and Assumption~\ref{assmpalpha} hold for $\a_1=\a$.
\end{example}

\begin{example}\label{logarithm}
	Suppose that $\ell$ is the \emph{logarithmic loss function} given by
	\[
	\ell(y)=\begin{cases}-\log(-y)&\text{if }y<0, \\ +\infty &\text{else}.\end{cases}
	\] 
	Let $V\in L^1_+\sm\cb{0}$, $t,s\in\R$. Then, after some elementary calculations, we obtain
	\[
	\a(V,t)=\begin{cases}te^{\E\sqb{\log\of{V}}}&\text{if }t<0,\\ +\infty &\text{ else},\end{cases}\quad\quad \b(V,s)=\begin{cases} se^{-\E\sqb{\log\of{V}}}&\text{if }s<0,\\ 0&\text{ else}.\end{cases}
	\]
	It follows that Assumption~\ref{additionalbeta} and Assumption~\ref{assmpalpha} hold for $\a_1=\a$.
\end{example}

%\begin{example}\label{power}
%	Let $a\in (0,1)$ be an exponent and suppose that $\ell$ is the \emph{power loss function} given by
%	\[
%	\ell(y)=\begin{cases}-(-y)^{1-a}&\text{if }y\leq 0, \\ +\infty &\text{else}.\end{cases}
%	\]
%	Let $V\in L^q_+\sm\cb{0}$, $t,s\in\R$. Then,
%	\[
%	\a(V,t)=\begin{cases} t\norm{V}_{\frac{a-1}{a}}&\text{if }t<0,\\ +\infty&\text{else},\end{cases}\quad\quad \b(V,s)=\begin{cases}\frac{s}{\norm{V}_{\frac{a-1}{a}}}&\text{if }s<0,\\ 0& \text{else }.\end{cases}
%	\]
%	and
%	\[
%	\tilde{\a}(V)\coloneqq \inf_{t\in\R}\of{t-\a(V,t)}=\inf_{t<0}\of{t-t\norm{V}_{\frac{a-1}{a}}}=\begin{cases}0 &\text{if }\norm{V}_{\frac{a-1}{a}}\leq 1,\\ -\infty & \text{else}.\end{cases}
%	\]
%	\end{example}

We conclude the paper with an example where $\rho_1,\rho_2$ are assumed to be certainty equivalents whose respective loss functions $\ell_1,\ell_2$ are among the two examples described above, which illustrates a possible special form of $(\mathscr{F}^D(t,r))$. %In this case, it is clear from the expressions of the minimal penalty functions that Assumption \ref{assmpalpha} and Assumption \ref{additionalbeta} are satisfied.

\begin{example}\label{quadlog}
	Suppose that $\ell_1$ is the quadratic loss function in Example~\ref{quadratic} and $\ell_2$ is the logarithmic loss function in Example~\ref{logarithm}. Assume that $r<0$ and take $t\in\R$. If $t\leq -1$, then $(\mathscr{F}^D(t,r))$ can be rewritten as 
	\begin{align*}
		&\text{maximize}\; \;               -\E[V_1]-re^{\E\sqb{\log\of{V_2}}}-y\\
		&\text{subject to}\;\;                   \E[V_1X]+\E[V_2X]\leq y\1\\
		& \quad\quad\quad\quad\;\;\;      V_1, V_2\in L^1_+,\; y\in\R.
	\end{align*}
	If $t\in (-1,0)$, then $(\mathscr{F}^D(t,r))$ becomes
	\begin{align*}
		&\text{maximize}\; \;               (1+t)\norm{V_1}_2-\E[V_1]-re^{\E\sqb{\log\of{V_2}}}-y\\
		&\text{subject to}\;\;                   \E[V_1X]+\E[V_2X]\leq y\1\\
		& \quad\quad\quad\quad\;\;\;      V_1, V_2\in L^1_+,\; y\in\R.
	\end{align*}
	Finally, if $t\geq 0$, then $(\mathscr{F}^D(t,r))$ becomes
	\begin{align*}
		&\text{maximize}\; \;               (1+t)\norm{V_1}_2-\E[V_1]-y\\
		&\text{subject to}\;\;                   \E[V_1X]+\E[V_2X]\leq y\1\\
		& \quad\quad\quad\quad\;\;\;      V_1, V_2\in L^1_+,\; y\in\R.
	\end{align*}
	As noted in Example~\ref{quadratic}, Assumption~\ref{additionalbeta} and Assumption~\ref{assmpalpha} hold in this case. Moreover, Assumption~\ref{quasirelintnew2} holds for every $t\in\R$ trivially since $\a_1$ and $\a_2$ are real-valued. Then, Theorem~\ref{thm3} is applicable and we may apply the procedure described at the end of Section \ref{quasiconvexcase} to find an approximately optimal portfolio for $(\P(r))$.
\end{example}

%Example \ref{quadlog} provides a case in which the risk measures are quasiconvex but not convex, and Theorem \ref{thm2} is still applicable due to the nice structure of $\tilde{\a}_1$. This suggests that Assumption \ref{tildealphaconcave} is not too restrictive and can still work beyond the case where $\rho_1$ is a convex risk measure. We conclude the paper by providing an example in which Assumption \ref{tildealphaconcave} does not hold.
%
%\begin{example}\label{logquad}
%	Suppose that $\ell_1$ is the logarithmic loss function in Example \ref{logarithm} and $\ell_2$ is the quadratic loss function in Example \ref{quadratic}. Assume that $r>-1$ and take $t\in\R$. If $t\leq -1$, then $(\mathscr{F}^D(t,r))$ can be rewritten as 
%	\begin{align*}
%	&\text{maximize}\; \;               -te^{\E[\log(V_1)]}-\E[V_2]-y\\
%	&\text{subject to}\;\;                   \E[V_1X]+\E[V_2X]\leq y\1\\
%	& \quad\quad\quad\quad\;\;\;      V_1, V_2\in L^1_+,\; y\in\R.
%	\end{align*}
%	However, as noted in Example \ref{logarithm}, Assumption \ref{tildealphaconcave} is violated by $\tilde{\a}_1$ and $(\D(r))$ is not a concave maximization problem in this case. Hence, Theorem \ref{thm2} is not applicable for this example although Theorem \ref{thm1} is still applicable, that is, $(\P(r))$ and $(\D(r))$ share the same optimal value.
%\end{example}

%\section*{Acknowledgments}\label{Acknowledgements}

\bibliographystyle{named}

\end{document}